\newtheorem{theorem}{Theorem}
\newenvironment{proof}[1][Proof]{\noindent\textbf{#1.} }{\ \rule{0.5em}{0.5em}}
\def\diag{\operatorname{diag}}
\newcommand{\abs}[1]{\left\vert#1\right\vert}
\newcommand{\s}{\sigma}
\newcommand{\mb}[1]{{\bm#1}}
\newcommand{\tr}{\operatorname{Tr}}
\newcommand{\beq}{\begin{equation}}
\newcommand{\eeq}{\end{equation}}
\newcommand{\baq}{\begin{eqnarray}}
\newcommand{\eaq}{\end{eqnarray}}
\newcommand{\brac}[1]{\left\lbrace #1
	 \right\rbrace}
\def\ket#1{| #1 \rangle}
\def\bra#1{\langle #1 |}
\def\ave#1{\langle #1 \rangle}
\newcommand{\sbrac}[1]{\left[#1\right]}
\newcommand{\pbrac}[1]{\left(#1\right)}
\def\U{\mathcal{U}}
\def\C{\mathcal{C}}
\def\G{\mathcal{G}}
\newcommand{\hq}[1]{{#1}}
\begin{document}
	
	\title{Regimes of Classical Simulability for Noisy Gaussian Boson Sampling}
	\author{Haoyu Qi}
	\affiliation{Xanadu, 777 Bay St, Toronto ON, M5G 2C8, Canada,}
	\author{Daniel J. Brod}
	\affiliation{Instituto de F\'{i}sica, Universidade Federal Fluminense, Niter\'{o}i RJ, 24210-340, Brazil}
	\author{Nicol\'{a}s Quesada}
	\affiliation{Xanadu, 777 Bay St, Toronto ON, M5G 2C8, Canada,}

	\author{Ra\'{u}l Garc\'{i}a-Patr\'{o}n}
	\affiliation{Centre for Quantum Information and Communication, \'Ecole polytechnique de Bruxelles,
		CP 165, Universit\'{e} libre de Bruxelles, 1050 Brussels, Belgium}

\begin{abstract}
As a promising candidate for exhibiting quantum computational supremacy, Gaussian Boson Sampling (GBS) is designed to exploit the ease of experimental preparation of Gaussian states. 
However, sufficiently large and inevitable experimental noise might render GBS classically simulable.
In this work, {we formalize} this intuition by establishing a sufficient condition for approximate {polynomial-time classical} simulation of noisy GBS --- {in the form  of} an inequality between the input squeezing parameter, the overall transmission rate and  the quality of photon detectors. Our result serves as a non-classicality test that must be passed by any quantum \hq{computational} supremacy demonstration based on GBS. We show that, for most linear-optical architectures, where photon loss increases exponentially with the circuit depth, noisy GBS loses its quantum advantage in the asymptotic limit. Our results thus delineate intermediate-sized regimes where GBS devices might considerably outperform classical computers for modest noise levels.
Finally, we find that increasing the amount of input squeezing is helpful to evade our classical simulation algorithm, which suggests a potential route to mitigate photon loss.
\end{abstract}
\maketitle

Quantum computational supremacy ~\cite{harrow2017quantum,Preskill2018quantumcomputingin}---when a quantum device performs a computational
task beyond the capabilities of classical computers---is a long-anticipated milestone towards practical quantum computation. 
A recent publication has 
claimed to have achieved such milestone on a quantum processor with 53 superconducting qubits~\cite{arute2019quantum}. However, subsequent work managed to significantly speed up the classical simulation of 
this multi-qubit 
device~\cite{pednault2019leveraging}. This suggests that quantum computational supremacy, instead of
being a one-shot experimental proof, will be the result of a long-term competition between
continuously improved quantum devices and faster classical simulations. 

{Since near-term quantum devices do not benefit from error correction, in a realistic scenario noise increases with the size of the experiment and potentially negates its quantum advantage. Numerous works have studied the effects of noise on Boson Sampling (BS), including partial photon distinguishability~\cite{rohde2015boson,shchesnovich2015tight,renema2018efficient}, fabrication imperfections ~\cite{kalai2014gaussian,leverrier2013analysis,arkhipov2015bosonsampling}, losses~\cite{rahimi2016sufficient,oszmaniec2018classical,garcia2017simulating,renema2018quantum,aaronson2016losses}, and detector dark counts~\cite{rahimi2016sufficient}. 
In contrast with BS, no rigorous analysis has been carried out for noisy Gaussian Boson Sampling (GBS), a variant of BS which finds applications in specific computational problems such as dense subgraph searching~\cite{arrazola2018using,banchi2019molecular}, perfect matching counting~\cite{bradler2018graph}, graph isomorphism~\cite{bradler2018gaussian} and the simulation of vibrational spectra~\cite{huh2015boson}.
In addition, GBS is related to non-Gaussian probabilistic state  engineering~\cite{sabapathy2018near,su2019generation,gagatsos2019efficient}.}

Concretely, GBS~\cite{hamilton2017gaussian} describes a computational task in which photon statistics is directly measured from a Gaussian state. 
An arbitrary GBS instance can be implemented by: (i) deterministic preparation of
$K$ single-mode squeezed vacuum states; 
(ii) interference over an $M$-mode interferometer~\cite{reck1994experimental,clements2016optimal}; (iii) sampling of output statistics by photon number resolving detectors. 
The deterministic sources, together with its high generation probability and sampling rate,
render GBS a {promising} alternative to other proposals.

In this work, we take the first step towards answering the following questions: what is the minimum squeezing needed to demonstrate quantum \hq{advantage} with GBS? How much photon loss would render the device classically simulable? What level of dark counts and inefficiencies can be tolerated in the detectors? These questions are crucial for a deeper understanding of the origin of quantum complexity in GBS, as well as for the design of any experiment aiming at a demonstration of quantum \hq{computational} supremacy.

\textit{Main result and ramifications} --- 
Our main result states that a noisy GBS device (see Fig.~\ref{fig:noise-model}(a) for a schematic) can be classically efficiently simulated up to error $\epsilon$ if the following condition is satisfied:
\begin{align}\label{eq:main-result}
\text{sech}\sbrac{\frac{1}{2}\Theta\pbrac{\ln\pbrac{\frac{1-2q_D}{\eta e^{-2r}+1-\eta}}}} > e^{-\frac{\epsilon^2}{4K}}~.
\end{align}
Here the photon source is characterized by the squeezing parameter $r$, and $\eta$ is the overall transmission rate. The quality of the photon detectors is quantified by $q_D:=p_D/\eta_D$, where $p_D$ is the dark count rate and $\eta_D$ is the quantum efficiency. \hq{Finally, $\Theta$ is the ramp function $\Theta(x):=\max(x,0)$.}

\hq{We briefly summarize our classical algorithm as follows:
\begin{enumerate}
    \item If inequality \eqref{eq:main-result} holds, proceed; otherwise exit the algorithm.
    \item Calculate the classical Gaussian state which is closest to the output state of the noisy device. We derive an efficient analytical formula for this procedure.
    \item  Sample from the classical Gaussian state according to the method given in Ref.~\cite{rahimi2016sufficient}, which can be done efficiently.
\end{enumerate}
See Sec.~VI of the Supplemental Material \cite{supp} for a detailed description of our algorithm, which includes Refs.~\cite{arkhipov2012bosonic,bremner2010classical,Lund2014SBS,Scottblog,Brod2015depth,marian2002quantifying,spedalieri2012limit,muller2013on,wilde2014strong,frank2013monotonicity,beigi2013sandwiched,van2014renyi,seshadreesan2018renyi,wilde2017gaussian,killoran2019strawberry,Qirepo}.}

{A finite-sized experiment will count as strong evidence towards quantum \hq{computational} supremacy when the best known classical algorithm to simulate it, running on a classical supercomputer, takes a reasonably large amount of time (for further discussion, see \cite{bremner2016average,neville2017classical,boixo2018characterizing}). Therefore, inequality \eqref{eq:main-result}, for reasonably small $\epsilon$, defines a natural and useful test that any alleged quantum \hq{advantage} demonstration with noisy GBS must pass.}

To illustrate this, we test inequality \eqref{eq:main-result} on several recent small-scale GBS experiments \cite{clements2018approximating,paesani2018generation,zhong2019experimental}. For instance, in Ref.~\cite{paesani2018generation}, $K=4$ squeezed vacuum states with $r\approx 0.1$ were input into a 12-mode random-walk circuit with overall transmission $\eta=0.088$ and detector efficiency $\eta_D=0.78$. For typical superconducting nanowire single photon detectors, the dark count rate is  $p_D\approx10^{-4}$~\cite{hadfield2005single}. Using these numbers, the experiment can be efficiently simulated by our algorithm with error $2\%$. A simple python implementation takes $2$ms to output a sample on a laptop . In the most recent GBS demonstration, $K=6$ squeezed vacuum states with an average  $r\approx 0.38$ are coupled into a 12-mode interferometer implemented with bulk optics in free space~\cite{zhong2019experimental}. With overall transmission $\eta=0.89$ \footnote{The transmission rate of the interferometer  is 0.99, which can be found in in Ref.~\cite{zhong201812}. We assume that the efficiency of coupling the output light to the detector is $0.9$, which is reported from a recent BS experiment~\cite{wang2019boson}. This gives an overall transmission around $0.89$}, $\eta_D=0.82$ and assuming $p_D=10^{-4}$, 
we find that inequality  \eqref{eq:main-result}  has no solution for any $\epsilon\in [0,1]$, which makes this experiment pass our classicality test. 

We can also study the behavior of inequality \eqref{eq:main-result} in the asymptotic limit. \hq{Assuming that $q_D$ and $r$ are constants as $K$ increases, in the asymptotic limit of large $K$, we have the following} condition for efficient classical simulability
\begin{align}\label{eq:asymptotic}
\eta < \eta_\infty + \frac{1-2q_D}{1-e^{-2r}}\frac{\epsilon}{\sqrt{2K}}+\omega\pbrac{\frac{1}{\sqrt{K}}}~.
\end{align}
Here $\eta_\infty := q_D(1+\coth r)$, and $\eta <\eta_\infty$ {is the corresponding condition that follows for exact sampling ($\epsilon=0$) obtained previously in} Ref.~\cite{rahimi2016sufficient}. See Fig.~\ref{fig:suff-cond}(a) and (b) for plots of this inequality for various squeezing levels, transmission rates, and detector qualities. 

From our result cast into the form \eqref{eq:asymptotic}, we observe the following: (i) by including some error tolerance in the classical simulation, our result improves on previous algorithms of Ref.~\cite{rahimi2016sufficient} and gives tighter bounds {on noise parameters for finite-size experiments.} (ii) our result allows more general types of noise, including a combination of photon loss and dark counts, compared to previous analyses for standard BS~\cite{garcia2017simulating,oszmaniec2018classical}; (iii) increasing the input squeezing reduces the upper bound on the transmission rate, which underpins the notion of squeezing as a non-classical resource.

An important scenario which allows comparison with previous BS results is that of pure photon loss
by assuming perfect detectors, $q_D=0$. In this case, for exact sampling $(\epsilon=0)$, inequality \eqref{eq:main-result} gives trivial result $\eta<0$---a major drawback of the methods in Ref.~\cite{rahimi2016sufficient}. This can be understood by observing that a squeezed state, under the effect of pure loss, never becomes exactly a classical state (though it approximates one arbitrarily well). 

{In Sec.II of the Supplemental Material~\cite{supp}, we argue that exact sampling of GBS with arbitrary loss is computationally hard. The seemingly counter-intuitive result suggests that the notion of exact simulability is often too stringent to be considered in the error analysis of quantum devices, as even a real-world experiment is incapable of sampling from the exact theoretical distribution. Therefore, below we analyze the effect of pure loss by considering
the more relevant approximate simulation.}

In a realistic setup, the input squeezing (and the energy) per mode does not scale with $K$. Multiplying both sides of inequality~\eqref{eq:asymptotic} by the average photon number $\bar{N}=K\sinh^2 r$, we find that the inequality is satisfied if $\eta \bar{N}<O(\epsilon\sqrt{\bar{N}})$. Thus, lossy GBS can be efficiently simulated when the average number of surviving photons is less than $O(\sqrt{\bar{N}})$, which matches the scaling obtained for standard BS~\cite{oszmaniec2018classical,garcia2017simulating}.

In most linear-optical architectures, photon loss is defined by unit depth of the circuit, leading to
an exponential decrease of transmission with the depth. Our results imply that, {asymptotically,} GBS implemented on these platforms is rendered efficiently simulable by losses if the depth is linear in the number of modes.  
It is easy to see that condition \eqref{eq:asymptotic} is always satisfied for these circuits when they have super-logarithmic depth, which also holds for BS \cite{garcia2017simulating,oszmaniec2018classical}. 
In the other extreme, for planar circuits (i.e.,\ with only nearest-neighbor beam splitters) of \emph{logarithmic} depth we can construct a tensor network simulation that runs in quasi-polynomial time. This is analogous to a similar algorithm for BS \cite{garcia2017simulating}. The only difference is that we need to introduce an additional cutoff on the Hilbert space, for large photon numbers, that does not degrade too much the Gaussian state nor slows down the simulation.

From this asymptotic analysis we conclude that lossy GBS inevitably loses its quantum advantage for sufficiently large sizes, {at least within most current architectures}. Nonetheless, a large running time advantage might be still be possible for GBS devices with intermediate sizes and modest (but non-negligible) noise rates. A similar situation has been discussed in the context of qubits~\cite{gao2018efficient}.

\textit{Gaussian states and classicality}--- Our results crucially follow from the fact that classical Gaussian states can be sampled efficiently~\cite{rahimi2016sufficient}. Here we briefly review relevant definitions and a few well-known facts.

Consider a single-mode bosonic system and let $\hat{\mb{x}}:=(\hat{q},\hat{p})$ denote the row vector of its quadrature operators. They satisfy the commutation relations ($\hbar=2$)
$
\sbrac{\hat{x}_j,\hat{x}_k} =2 i\bm{\Omega}_{jk}~,
$
where $
\mb{\Omega}
=\left( \begin{smallmatrix}
0 & -1 \\ 1 & 0
\end{smallmatrix} \right)$. A state $\rho$ is Gaussian if it is fully characterized by its mean vector $\bar{\mb{x}}_\rho=\tr\brac{\rho \hat{\mb{x}}}$ and its covariance matrix $
\bm{V}_\rho^{j,k} = \frac{1}{2}\tr{}\brac{\rho\sbrac{\hat{x}_j-\bar{x}_\rho^j,\hat{x}_k-\bar{x}_\rho^k}_+}~,
$
where $\sbrac{\cdot,\cdot}_+$ is the anti-commutator~\cite{weedbrook2012gaussian}. {All distance measures we consider are minimized when two Gaussian states are displaced along the same direction by the same amount~\cite{seshadreesan2018renyi},} so we set $\bar{\mb{x}}=0$ from now on.

The ${(t)}$-ordered phase-space quasi-probability distribution [$({t})$-PQD] of a single mode Gaussian state is given by
\begin{align}\label{eq:PQD-Gaussian}
W^{(t)}_\rho(\mb{x})=\frac{\exp\sbrac{-\frac{1}{2}\mb{x}^T(\mb{V}_\rho-t\mb{I}_2)^{-1}\mb{x}}}{2\pi\sqrt{\det(\mb{V}_\rho-t\mb{I}_2)}}~.
\end{align}
For ${t} =
-1, 0$ and $1$, we obtain the Husimi, Wigner and Glauber-Sudarshan functions, respectively.
Equation (\ref{eq:PQD-Gaussian}) only holds when $\mb{V}_\rho-t\mb{I}_2$ is positive definite. Since the covariance matrix $\mb{V}_\rho$ is positive definite, there always exists $t^*_{\rho} \in [ 0,1 ]$ 
such that: for $t<t^*_{\rho}$ the $(t)$-PQD is a Gaussian function; for $t=t^*_{\rho}$, the $(t)$-PQD has $\delta$-function singularities; and for $t>t^*_{\rho}$, the $(t)$-PQD does not exist. A Gaussian state is called classical if its $P$ function (i.e.,\ for $t=1$)
is well-behaved.
We slightly generalize this notion of classicality and say that $\rho$ is $(t)$-classical if 
{
$\mb{V}_\rho-t\mb{I}_2$ is positive definite, or equivalently, if}
its $(t)$-PQD is non-singular (i.e.,\ if $t\leq t^*_{\rho}$).
Finally, we denote $\C_G^{(t)}$ as the set of $(t)$-classical Gaussian states.

 \begin{figure}[t]
	\centering
	\includegraphics[width=0.45\textwidth]{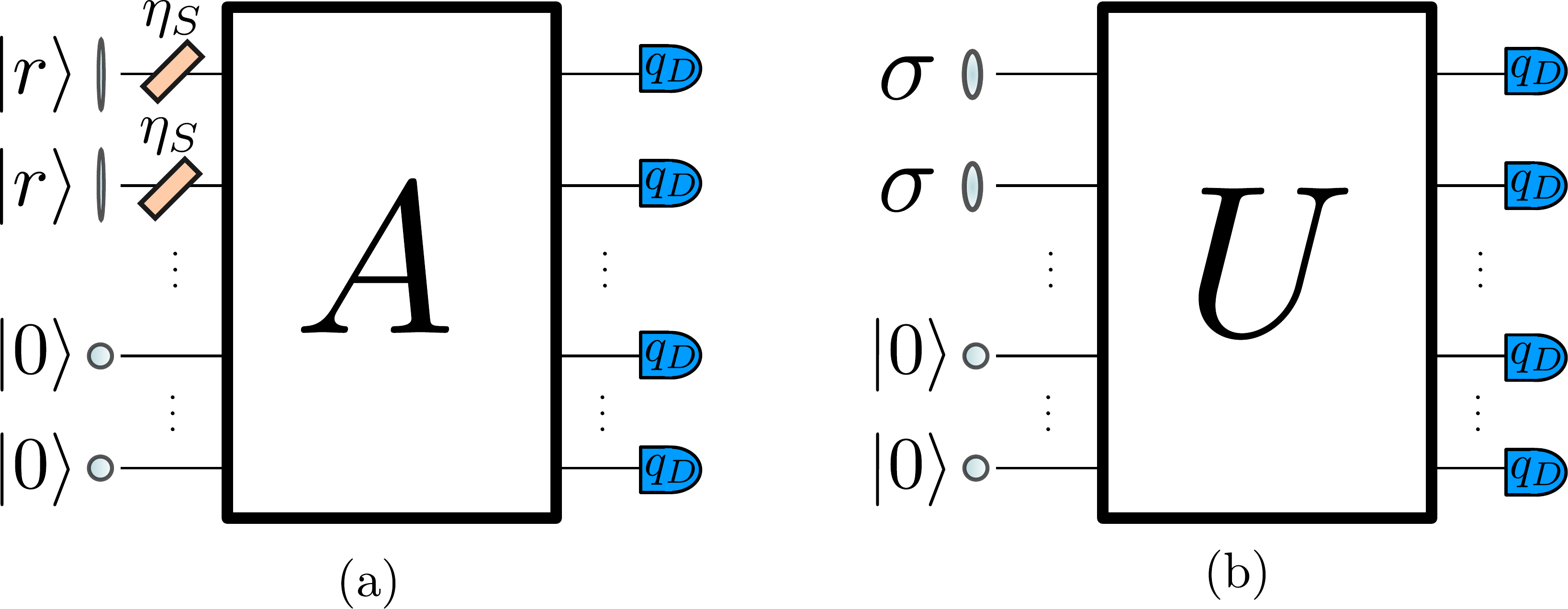}
	\caption{Simplification of noise model. (a) A realistic GBS experiment suffers from imperfect squeezing, modelled by an ideal squeezer followed by a loss channel with transmission $\eta_S$, a lossy interferometer, described by a sub-unitary matrix $\mb{A}$, and inefficient noisy threshold detectors, characterized by $q_D:=p_D/\eta_D$ where $p_D$ and $\eta_D$ are the dark count rate and its quantum efficiency, respectively. (b)  Under the assumption of uniform loss within the interferometer, we can absorb all losses into the thermalized squeezed state $\sigma$. \hq{We also assume that $r$ and $q_D$ do not change as the device size increases, which is reasonable for the regime of near-term experiments.}\label{fig:noise-model}}
\end{figure}
\textit{Noise model}---The input squeezed light
is usually noisy due to photon loss in its preparation~\cite{vernon2018scalable}. We model such source imperfection as an ideal squeezed vacuum state followed by a lossy channel with transmission $\eta_S$. An $M$-mode lossy interferometer is mathematically described by a sub-unitary transformation matrix $\mb{A}$. By assuming uniform loss, which is usually a good approximation for integrated platforms, the {interferometer} can be simplified to $M$ single-mode lossy channels, each with transmission $\eta_I$, followed by an ideal unitary transformation $\mb{U}$ (see \cite{garcia2017simulating,oszmaniec2018classical} for a rigorous treatment). That is, we can write $\mb{A}=\eta_I \bm{U}$. As a final simplification, we combine the source and interferometer losses into a single loss channel with transmission $\eta:=\eta_S\eta_I$ and encapsulate it into a mixed input Gaussian state $\sigma$ (see Fig.~\ref{fig:noise-model}).

{The current proofs for hardness of (Gaussian) Boson Sampling require a no-collision regime}~\cite[Sec.~I]{supp}, where the probability of detecting more than one photon per mode is negligible. In this regime  we can replace photon counting detectors by threshold detectors without loss of generality~\cite{rahimi2016sufficient,quesada2018gaussian}.
Each threshold detector has sub-unity quantum efficiency denoted by $\eta_D$ and registers random dark counts with probability $p_D$. Following Ref.~\cite{rahimi2016sufficient,barnett1998imperfect} we describe {them} by the POVM elements: $\Pi_0=(1-p_D)\sum_{n=0}^\infty(1-\eta_D)^n\ket{n}\bra{n}$ and $\Pi_1=I-\Pi_0$. We define $q_D=p_D/\eta_D$ as the figure of merit that characterizes the detectors.

With this noise model, the probability distribution at the output reads $P(\mb{n})=\tr\brac{\rho_{\text{out}}\Pi_{\mb{n}}}$, with $\mb{n}=(n_1,n_2,\ldots, n_M)$ and $n_i \in \{0,1\}$, where
\begin{align}\label{eq:rho-out}
\rho_{\text{out}}=\mathcal{U}\left(\sigma^{\otimes K}\otimes \ket{0}\bra{0}^{\otimes (M-K)}\right)\mathcal{U}^\dagger~,
\end{align}
and $\mathcal{U}$ is the Hilbert space unitary associated to the unitary matrix $\bm{U}$ acting in mode space.
{Proving the conditions for efficient classical sampling from these distribution, namely inequality \eqref{eq:main-result}, is the subject of the remainder of this paper}.

\textit{Approximate simulation of noisy GBS}---In Ref.~\cite{rahimi2016sufficient}, the authors studied the problem of exact sampling from an $M$-mode quantum state of the form
\begin{align}\label{eq:rho-tilde}
\tilde{\rho}=\mathcal{U}\pbrac{\tau^K\otimes\ket{0}\bra{0}^{M-K}}\mathcal{U}^\dagger
\end{align}
using threshold detectors of quality $q_D$. They proved this task can be simulated exactly with polynomial running time if  $\tau$ is a $(t)$-classical state, for $t\in[1-2q_D,1]$ (See \cite[Sec. VI]{supp} for a brief review of this result). In this work, we restrict our consideration to $(t)$-classical Gaussian states for a given $q_D$.

{It is natural to expect that, when the mixed input state $\sigma$ is close to some $(t)$-classical Gaussian state $\tau$, the corresponding noisy GBS experiment can be efficiently simulated with small error. Since any such state $\tau$ leads to an efficient classical simulation, we minimize the distance to $\sigma$ over all possible choices of $\tau$. It is this intuition that eventually leads to our main result given by condition~\eqref{eq:main-result}.}

We start our formal derivation by connecting the distance between the two output distributions to the distance between the two output states. The former is usually measured by the total variation distance, $D_T(\tilde{P},P):=\frac{1}{2}\sum_{\mb{n}}\abs{\tilde{P}(\mb{n})-P(\mb{n})}$. For the distance between states we choose the sandwiched R\'{e}nyi relative entropy due to its generality, denoted by $D_\alpha(\sigma\Vert\tau)$, 
for $\alpha\in[\frac{1}{2},1)$~\cite{muller2013on,wilde2014strong}. The two measures can be connected by the following inequality chain:
\begin{align}\label{eq:bound-D_T}
D_T(P,\tilde{P})\leq \sqrt{\frac{2}{\alpha}D_\alpha(P\Vert \tilde{P})}\leq \sqrt{\frac{2}{\alpha}D_\alpha(\rho\Vert \tilde{\rho})}~.
\end{align}
We first used a generalized Pinkser's inequality and, in a slight abuse of notation, $D_\alpha(P\Vert \tilde{P})$ also denotes the R\'{e}nyi divergence between two distributions~\cite{van2014renyi}. The second inequality follows from the data-processing inequality under quantum measurements~\cite{muller2013on}.

The last quantity can be greatly simplified due to the similar structures of Eqs.~\eqref{eq:rho-out} and \eqref{eq:rho-tilde}. Since the sandwiched R\'{e}nyi relative entropy is invariant under the action of the unitary transformation and is additive under tensor products~\cite{muller2013on}, we have that $D_\alpha(\rho\Vert \tilde{\rho})=KD_\alpha(\sigma\Vert\tau)$. Substituting this into  Eq.~\eqref{eq:bound-D_T} we see that $D_T(P,\tilde{P})\leq\epsilon$ whenever 
\begin{align}\label{eq:suff-cond-alpha}
\frac{2}{\alpha}D_\alpha(\sigma\Vert\tau)< \frac{\epsilon^2}{K}~.
\end{align} 

We could optimize this bound over $\alpha\in \left[\tfrac{1}{2},1\right)$ --- since $D_\alpha$ is non-decreasing over $\alpha$~\cite{muller2013on}, we expect an optimal $\alpha^*$ exists. For simplicity we only consider the case $\alpha=\tfrac{1}{2}$, which admits analytical calculations; a numerical optimization over $\alpha$ can be found in~\cite[Sec.~V]{supp}.
In this special case, we have $D_{\frac{1}{2}}(\sigma,\tau)=-\ln F(\sigma,\tau)$, where $F(\sigma,\tau):=\tr\brac{\sqrt{\sqrt{\sigma}\tau\sqrt{\sigma}}}^2$ is the quantum fidelity. As we mentioned earlier, we will have a tighter bound if  $\tau$ is further optimized over $C^{(t)}_G$ with $t\in [1-2q_D,1]$. We thus have
\begin{align}\label{eq:sufficient-condition-fid}
-\ln\sbrac{F_{\max}(\eta,q_D)}< \frac{\epsilon^2}{4K}~,
\end{align}
with $F_{\max}(\eta,q_D):=\max_{\tau\in \C_{G}^{(t)}, t\in[1-2q_D,1]}F(\sigma,\tau)$. 

The argument above shows that a noisy GBS can be efficiently simulated with error no more than $\epsilon$ if the above inequality holds; in other words, we have reduced our multi-mode problem into a single-mode optimization.

\textit{Optimization of the fidelity}---To perform the optimization, it is convenient to use the squeezed thermal states (STS) parameterization~\cite{marian1993squeezed}. Any single-mode Gaussian state $\rho$ can be written as 
$\rho=S(s_\rho,\phi_\rho)\rho_T^{n_\rho}S^\dagger(s_\rho,\phi_\rho)$. Here  $\rho_T^{n_\rho}$ is a thermal state with average photon number ${n_\rho}$, and 	
$S(s_\rho,\phi)=\exp\sbrac{\frac{1}{2}s_\rho e^{i\phi_\rho}\hat{a}^{\dagger 2}-\frac{1}{2}s_\rho e^{-i\phi_\rho}\hat{a}^{2}}$ is the squeezing operator with squeezing parameter $s_\rho$
and phase rotation $\phi_\rho$. For the lossy squeezed state $\sigma$ appearing in our noise model (Fig.~\ref{fig:noise-model}), according to Ref.~\cite{supp}, Sec.~II, we have $\phi_\sigma=0$ and
\begin{align}\label{eq:sigma-s}
s_\sigma&=\frac{1}{4}\ln\frac{a_+}{a_-}~,\\\label{eq:sigma-n}
n_\sigma&=\frac{1}{2}(\sqrt{a_+a_-}-1)~.
\end{align}

A straightforward calculation  using an analytical expression of the quantum fidelity  between two single-mode Gaussian states ~\cite{holevo1975some} leads to
\begin{align}
F &= \frac{1}{\sqrt{\Delta + \Lambda} - \sqrt{\Lambda}}, \\
\Delta&=(n_\sigma-n_\tau)^2+(2n_\sigma+1)(2n_\tau+1)\cosh^2(s_\sigma-s_\tau)~, \nonumber\\
\Lambda&=4n_\sigma(n_\sigma+1)n_\tau(n_\tau+1). \nonumber 
\end{align}

Since $\tau$ is a $(t)$-classical Gaussian state, it is straightforward to show that this implies $s_\tau\leq \frac{1}{2}\ln\frac{2n_\tau+1}{t}$~\cite[Sec.~III]{supp}. 
After solving this constrained optimization problem, we obtain that the maximum fidelity is
$
\text{sech}\sbrac{\Theta\pbrac{s_\sigma-\frac{1}{2}\ln \frac{2n_\sigma+1}{t}}}
$ ~\cite[Sec.~IV]{supp}. Finally, we optimize this quantity again over $t\in[1-2q_D,1]$. Clearly, the maximum value is achieved at $t = 1-2q_D$. Using Eqs.~\eqref{eq:sigma-s}\--\eqref{eq:sigma-n} we finally arrive at the following solution:
\begin{align}\label{eq:d_min}
F_{\max}(\eta,q_D)=\text{sech}\sbrac{\frac{1}{2}\Theta\pbrac{\ln\pbrac{\frac{1-2q_D}{\eta e^{-2r}+1-\eta}}}}~.
\end{align}
Plugging this into Eq.~\eqref{eq:sufficient-condition-fid} we attain our main result for classical simulability of noisy GBS, expressed in Eq.~\eqref{eq:main-result}.
This inequality draws a boundary in the parameter space $(\eta, q_D, K, \epsilon)$. In Fig.~\ref{fig:suff-cond} (a) and (b) we plot them in the $\eta-K$ plane for several values of the squeezing parameter $r$ and detector quality $q_D$. For noise parameters below the corresponding solid line, GBS can be efficiently simulated with error $\epsilon\leq 0.01$. 
{We observe that} the region where quantum computational supremacy has not been ruled out expands as we increase the quality of detectors or the amount of input squeezing; And the latter underpins the notion of squeezing as a non-classical resource.
For bounds obtained by using general sandwiched R\'{e}nyi relative entropies, our numerical simulation results shown in Fig.~\ref{fig:suff-cond}(c) and (d) suggest that the quantum fidelity (corresponding to $\alpha=1/2$) is optimal, while quantum relative entropy (corresponding to $\alpha\rightarrow 1$) gives the worst bound.  For more details see the Supplemental Material~\cite{supp}.
\begin{figure}[t]
	\includegraphics[scale=0.3]{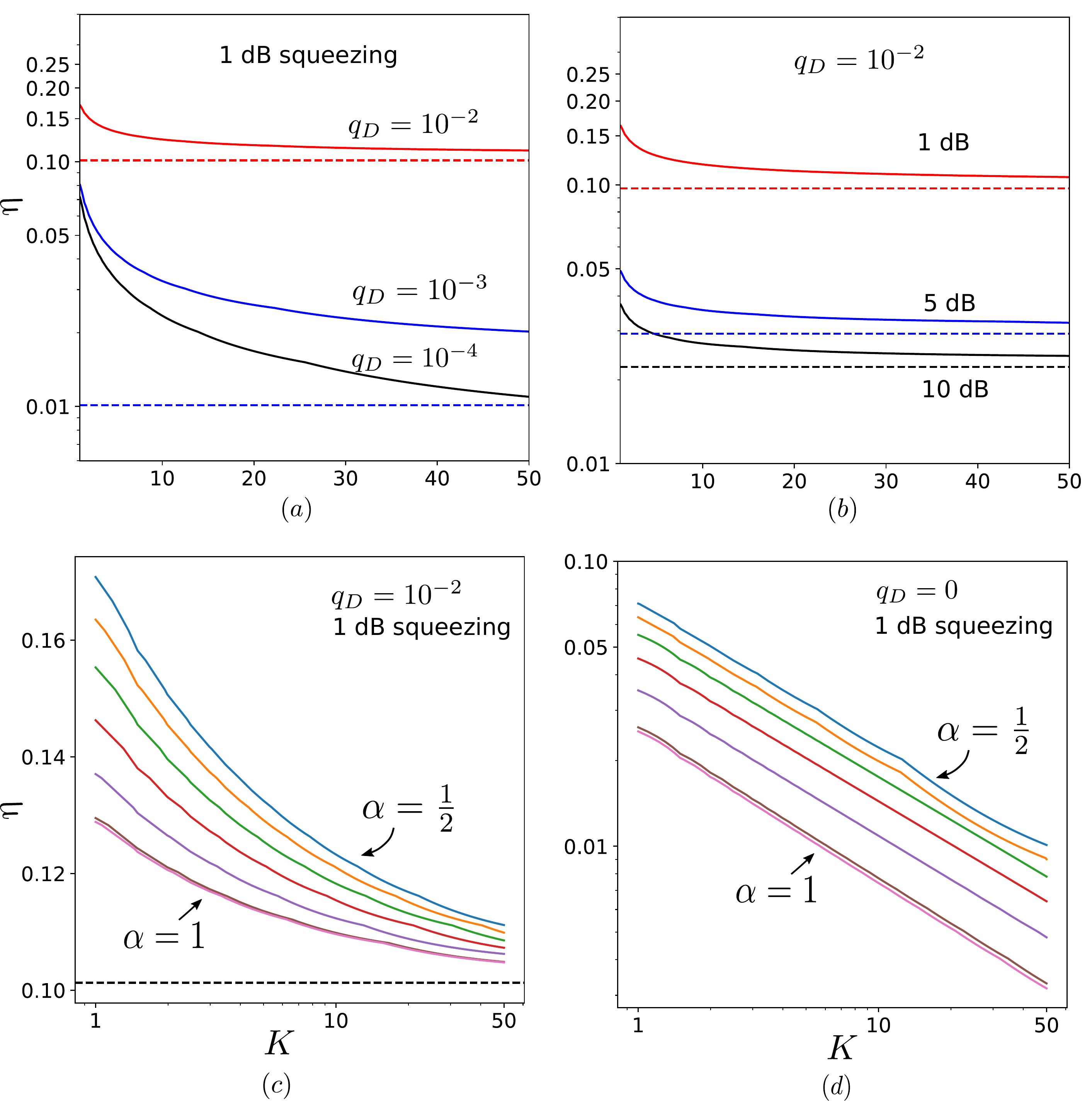}
	\caption{(a), (b) 
	{Sufficient condition for efficient simulation} 
	 of GBS by using the quantum fidelity. In both cases $\epsilon=0.01$. Solid lines are the conditions given by Eq.~\eqref{eq:main-result}. Dashed lines correspond to the conditions given by $\eta=\eta_\infty$. Above such lines GBS can be simulated approximately and exactly in polynomial time, respectively. In (a) we fix input squeezing to 1~dB, and from top to bottom we choose $q_D=10^{-2},10^{-3},10^{-4}$. In (b) we fix $q_D=10^{-2}$ and from top to bottom we choose 1~dB, 5~dB and 10~dB squeezing. (c), (d) 
	{Sufficient condition for efficient simulation} 
	of GBS by using the $\alpha$-order sandwiched R\'{e}nyi relative entropy, as given in Eq.~\eqref{eq:suff-cond-alpha}. From top to bottom $\alpha=0.5,0.6,0.7,0.8,0.9,0.99,0.999$. Each line is obtained by numerically optimizing $D_\alpha(\sigma\Vert\tau)$ over $\tau\in\C_{G}^{(t)}$ for $t\in[1-2q_D,1]$. In both cases we fix 1 dB of squeezing. In (c) we choose $q_D=10^{-2}$ while in (d) we consider perfect detectors. 
	}
	\label{fig:suff-cond}
\end{figure}

{\textit{Conclusion}.---In this work, we established a non-classicality condition that any quantum \hq{computational} supremacy demonstration based on GBS must satisfy, connecting important experimental parameters, such as the squeezing in the photon
source, the overall photon transmission rate, and the dark count rates in photon detection. Our result shows that lossy GBS can be efficiently simulated when the average number of
surviving photons is quadratically related to the number of input photons, which matches
the scaling obtained for BS~\cite{garcia2017simulating,oszmaniec2018classical}. 

Motivated by recent works for BS~\cite{renema2018quantum,renema2018efficient,moylett2019classically}, exploiting the properties of the combinatorial quantity associated to the
statistics of bosons is an important direction of future research, which could lead to
similar improvement on classical simulation of noisy GBS. Another interesting topic for future research is whether non-uniformity of photon loss can be leveraged to improve our simulation \cite{brod2019nonuniform}}.

\begin{acknowledgments}
We thank Juan Miguel Arrazola, Kamil Br\'{a}dler, Ish Dhand, Saleh Rahimi-Keshari, Jonathan Lavoie, Chao-Yang Lu, Jelmer J. Renema, Daiqin Su, Maria Schuld, Krishna K. Sabapathy, Zachary Vernon, Christian Weedbrook, and Han-Sen Zhong for helpful discussions. We also thank the anonymous referees for their useful comments which greatly improve the presentation of our. HQ is grateful to Weiying Tang for a careful reading of the manuscript. DJB acknowledges support from CNPq project Instituto Nacional de Ci\^encia e Tecnologia de Informa\c c\~ao Qu\^antica.
\end{acknowledgments}
\bibliography{ref}

\begin{thebibliography}{62}
\expandafter\ifx\csname natexlab\endcsname\relax\def\natexlab#1{#1}\fi
\expandafter\ifx\csname bibnamefont\endcsname\relax
  \def\bibnamefont#1{#1}\fi
\expandafter\ifx\csname bibfnamefont\endcsname\relax
  \def\bibfnamefont#1{#1}\fi
\expandafter\ifx\csname citenamefont\endcsname\relax
  \def\citenamefont#1{#1}\fi
\expandafter\ifx\csname url\endcsname\relax
  \def\url#1{\texttt{#1}}\fi
\expandafter\ifx\csname urlprefix\endcsname\relax\def\urlprefix{URL }\fi
\providecommand{\bibinfo}[2]{#2}
\providecommand{\eprint}[2][]{\url{#2}}

\bibitem[{\citenamefont{Harrow and Montanaro}(2017)}]{harrow2017quantum}
\bibinfo{author}{\bibfnamefont{A.~W.} \bibnamefont{Harrow}} \bibnamefont{and}
  \bibinfo{author}{\bibfnamefont{A.}~\bibnamefont{Montanaro}},
  \bibinfo{journal}{Nature} \textbf{\bibinfo{volume}{549}},
  \bibinfo{pages}{203} (\bibinfo{year}{2017}).

\bibitem[{\citenamefont{Preskill}(2018)}]{Preskill2018quantumcomputingin}
\bibinfo{author}{\bibfnamefont{J.}~\bibnamefont{Preskill}},
  \bibinfo{journal}{Quantum} \textbf{\bibinfo{volume}{2}}, \bibinfo{pages}{79}
  (\bibinfo{year}{2018}).

\bibitem[{\citenamefont{Arute et~al.}(2019)\citenamefont{Arute, Arya, Babbush,
  Bacon, Bardin, Barends, Biswas, Boixo, Brandao, Buell
  et~al.}}]{arute2019quantum}
\bibinfo{author}{\bibfnamefont{F.}~\bibnamefont{Arute}},
  \bibinfo{author}{\bibfnamefont{K.}~\bibnamefont{Arya}},
  \bibinfo{author}{\bibfnamefont{R.}~\bibnamefont{Babbush}},
  \bibinfo{author}{\bibfnamefont{D.}~\bibnamefont{Bacon}},
  \bibinfo{author}{\bibfnamefont{J.~C.} \bibnamefont{Bardin}},
  \bibinfo{author}{\bibfnamefont{R.}~\bibnamefont{Barends}},
  \bibinfo{author}{\bibfnamefont{R.}~\bibnamefont{Biswas}},
  \bibinfo{author}{\bibfnamefont{S.}~\bibnamefont{Boixo}},
  \bibinfo{author}{\bibfnamefont{F.~G.} \bibnamefont{Brandao}},
  \bibinfo{author}{\bibfnamefont{D.~A.} \bibnamefont{Buell}},
  \bibnamefont{et~al.}, \bibinfo{journal}{Nature}
  \textbf{\bibinfo{volume}{574}}, \bibinfo{pages}{505} (\bibinfo{year}{2019}).

\bibitem[{\citenamefont{Pednault et~al.}(2019)\citenamefont{Pednault, Gunnels,
  Nannicini, Horesh, and Wisnieff}}]{pednault2019leveraging}
\bibinfo{author}{\bibfnamefont{E.}~\bibnamefont{Pednault}},
  \bibinfo{author}{\bibfnamefont{J.~A.} \bibnamefont{Gunnels}},
  \bibinfo{author}{\bibfnamefont{G.}~\bibnamefont{Nannicini}},
  \bibinfo{author}{\bibfnamefont{L.}~\bibnamefont{Horesh}}, \bibnamefont{and}
  \bibinfo{author}{\bibfnamefont{R.}~\bibnamefont{Wisnieff}},
  \bibinfo{journal}{arXiv preprint arXiv:1910.09534}  (\bibinfo{year}{2019}).

\bibitem[{\citenamefont{Rohde}(2015)}]{rohde2015boson}
\bibinfo{author}{\bibfnamefont{P.~P.} \bibnamefont{Rohde}},
  \bibinfo{journal}{Physical Review A} \textbf{\bibinfo{volume}{91}},
  \bibinfo{pages}{012307} (\bibinfo{year}{2015}).

\bibitem[{\citenamefont{Shchesnovich}(2015)}]{shchesnovich2015tight}
\bibinfo{author}{\bibfnamefont{V.~S.} \bibnamefont{Shchesnovich}},
  \bibinfo{journal}{Physical Review A} \textbf{\bibinfo{volume}{91}},
  \bibinfo{pages}{063842} (\bibinfo{year}{2015}).

\bibitem[{\citenamefont{Renema et~al.}(2018{\natexlab{a}})\citenamefont{Renema,
  Menssen, Clements, Triginer, Kolthammer, and Walmsley}}]{renema2018efficient}
\bibinfo{author}{\bibfnamefont{J.}~\bibnamefont{Renema}},
  \bibinfo{author}{\bibfnamefont{A.}~\bibnamefont{Menssen}},
  \bibinfo{author}{\bibfnamefont{W.}~\bibnamefont{Clements}},
  \bibinfo{author}{\bibfnamefont{G.}~\bibnamefont{Triginer}},
  \bibinfo{author}{\bibfnamefont{W.}~\bibnamefont{Kolthammer}},
  \bibnamefont{and} \bibinfo{author}{\bibfnamefont{I.}~\bibnamefont{Walmsley}},
  \bibinfo{journal}{Physical Review Letters} \textbf{\bibinfo{volume}{120}},
  \bibinfo{pages}{220502} (\bibinfo{year}{2018}{\natexlab{a}}).

\bibitem[{\citenamefont{Kalai and Kindler}(2014)}]{kalai2014gaussian}
\bibinfo{author}{\bibfnamefont{G.}~\bibnamefont{Kalai}} \bibnamefont{and}
  \bibinfo{author}{\bibfnamefont{G.}~\bibnamefont{Kindler}},
  \bibinfo{journal}{arXiv preprint arXiv:1409.3093}  (\bibinfo{year}{2014}).

\bibitem[{\citenamefont{Leverrier and
  Garc{\'\i}a-Patr{\'o}n}(2015)}]{leverrier2013analysis}
\bibinfo{author}{\bibfnamefont{A.}~\bibnamefont{Leverrier}} \bibnamefont{and}
  \bibinfo{author}{\bibfnamefont{R.}~\bibnamefont{Garc{\'\i}a-Patr{\'o}n}},
  \bibinfo{journal}{Quantum Information and Computation}
  \textbf{\bibinfo{volume}{15}}, \bibinfo{pages}{489} (\bibinfo{year}{2015}).

\bibitem[{\citenamefont{Arkhipov}(2015)}]{arkhipov2015bosonsampling}
\bibinfo{author}{\bibfnamefont{A.}~\bibnamefont{Arkhipov}},
  \bibinfo{journal}{Physical Review A} \textbf{\bibinfo{volume}{92}},
  \bibinfo{pages}{062326} (\bibinfo{year}{2015}).

\bibitem[{\citenamefont{Rahimi-Keshari
  et~al.}(2016)\citenamefont{Rahimi-Keshari, Ralph, and
  Caves}}]{rahimi2016sufficient}
\bibinfo{author}{\bibfnamefont{S.}~\bibnamefont{Rahimi-Keshari}},
  \bibinfo{author}{\bibfnamefont{T.~C.} \bibnamefont{Ralph}}, \bibnamefont{and}
  \bibinfo{author}{\bibfnamefont{C.~M.} \bibnamefont{Caves}},
  \bibinfo{journal}{Physical Review X} \textbf{\bibinfo{volume}{6}},
  \bibinfo{pages}{021039} (\bibinfo{year}{2016}).

\bibitem[{\citenamefont{Oszmaniec and Brod}(2018)}]{oszmaniec2018classical}
\bibinfo{author}{\bibfnamefont{M.}~\bibnamefont{Oszmaniec}} \bibnamefont{and}
  \bibinfo{author}{\bibfnamefont{D.~J.} \bibnamefont{Brod}},
  \bibinfo{journal}{New Journal of Physics} \textbf{\bibinfo{volume}{20}},
  \bibinfo{pages}{092002} (\bibinfo{year}{2018}).

\bibitem[{\citenamefont{Garc{\'{i}}a-Patr{\'{o}}n
  et~al.}(2019)\citenamefont{Garc{\'{i}}a-Patr{\'{o}}n, Renema, and
  Shchesnovich}}]{garcia2017simulating}
\bibinfo{author}{\bibfnamefont{R.}~\bibnamefont{Garc{\'{i}}a-Patr{\'{o}}n}},
  \bibinfo{author}{\bibfnamefont{J.~J.} \bibnamefont{Renema}},
  \bibnamefont{and}
  \bibinfo{author}{\bibfnamefont{V.}~\bibnamefont{Shchesnovich}},
  \bibinfo{journal}{{Quantum}} \textbf{\bibinfo{volume}{3}},
  \bibinfo{pages}{169} (\bibinfo{year}{2019}).

\bibitem[{\citenamefont{Renema et~al.}(2018{\natexlab{b}})\citenamefont{Renema,
  Shchesnovich, and Garcia-Patron}}]{renema2018quantum}
\bibinfo{author}{\bibfnamefont{J.}~\bibnamefont{Renema}},
  \bibinfo{author}{\bibfnamefont{V.}~\bibnamefont{Shchesnovich}},
  \bibnamefont{and}
  \bibinfo{author}{\bibfnamefont{R.}~\bibnamefont{Garcia-Patron}},
  \bibinfo{journal}{arXiv preprint arXiv:1809.01953}
  (\bibinfo{year}{2018}{\natexlab{b}}).

\bibitem[{\citenamefont{Aaronson and Brod}(2016)}]{aaronson2016losses}
\bibinfo{author}{\bibfnamefont{S.}~\bibnamefont{Aaronson}} \bibnamefont{and}
  \bibinfo{author}{\bibfnamefont{D.~J.} \bibnamefont{Brod}},
  \bibinfo{journal}{Physical Review A} \textbf{\bibinfo{volume}{93}},
  \bibinfo{pages}{012335} (\bibinfo{year}{2016}).

\bibitem[{\citenamefont{Arrazola and Bromley}(2018)}]{arrazola2018using}
\bibinfo{author}{\bibfnamefont{J.~M.} \bibnamefont{Arrazola}} \bibnamefont{and}
  \bibinfo{author}{\bibfnamefont{T.~R.} \bibnamefont{Bromley}},
  \bibinfo{journal}{Physical Review Letters} \textbf{\bibinfo{volume}{121}},
  \bibinfo{pages}{030503} (\bibinfo{year}{2018}).

\bibitem[{\citenamefont{Banchi et~al.}(2019)\citenamefont{Banchi, Fingerhuth,
  Babej, Arrazola et~al.}}]{banchi2019molecular}
\bibinfo{author}{\bibfnamefont{L.}~\bibnamefont{Banchi}},
  \bibinfo{author}{\bibfnamefont{M.}~\bibnamefont{Fingerhuth}},
  \bibinfo{author}{\bibfnamefont{T.}~\bibnamefont{Babej}},
  \bibinfo{author}{\bibfnamefont{J.~M.} \bibnamefont{Arrazola}},
  \bibnamefont{et~al.}, \bibinfo{journal}{arXiv preprint arXiv:1902.00462}
  (\bibinfo{year}{2019}).

\bibitem[{\citenamefont{Br{\'a}dler
  et~al.}(2018{\natexlab{a}})\citenamefont{Br{\'a}dler, Friedland, Izaac,
  Killoran, and Su}}]{bradler2018graph}
\bibinfo{author}{\bibfnamefont{K.}~\bibnamefont{Br{\'a}dler}},
  \bibinfo{author}{\bibfnamefont{S.}~\bibnamefont{Friedland}},
  \bibinfo{author}{\bibfnamefont{J.}~\bibnamefont{Izaac}},
  \bibinfo{author}{\bibfnamefont{N.}~\bibnamefont{Killoran}}, \bibnamefont{and}
  \bibinfo{author}{\bibfnamefont{D.}~\bibnamefont{Su}}, \bibinfo{journal}{arXiv
  preprint arXiv:1810.10644}  (\bibinfo{year}{2018}{\natexlab{a}}).

\bibitem[{\citenamefont{Br{\'a}dler
  et~al.}(2018{\natexlab{b}})\citenamefont{Br{\'a}dler, Dallaire-Demers,
  Rebentrost, Su, and Weedbrook}}]{bradler2018gaussian}
\bibinfo{author}{\bibfnamefont{K.}~\bibnamefont{Br{\'a}dler}},
  \bibinfo{author}{\bibfnamefont{P.-L.} \bibnamefont{Dallaire-Demers}},
  \bibinfo{author}{\bibfnamefont{P.}~\bibnamefont{Rebentrost}},
  \bibinfo{author}{\bibfnamefont{D.}~\bibnamefont{Su}}, \bibnamefont{and}
  \bibinfo{author}{\bibfnamefont{C.}~\bibnamefont{Weedbrook}},
  \bibinfo{journal}{Physical Review A} \textbf{\bibinfo{volume}{98}},
  \bibinfo{pages}{032310} (\bibinfo{year}{2018}{\natexlab{b}}).

\bibitem[{\citenamefont{Huh et~al.}(2015)\citenamefont{Huh, Guerreschi,
  Peropadre, McClean, and Aspuru-Guzik}}]{huh2015boson}
\bibinfo{author}{\bibfnamefont{J.}~\bibnamefont{Huh}},
  \bibinfo{author}{\bibfnamefont{G.~G.} \bibnamefont{Guerreschi}},
  \bibinfo{author}{\bibfnamefont{B.}~\bibnamefont{Peropadre}},
  \bibinfo{author}{\bibfnamefont{J.~R.} \bibnamefont{McClean}},
  \bibnamefont{and}
  \bibinfo{author}{\bibfnamefont{A.}~\bibnamefont{Aspuru-Guzik}},
  \bibinfo{journal}{Nature Photonics} \textbf{\bibinfo{volume}{9}},
  \bibinfo{pages}{615} (\bibinfo{year}{2015}).

\bibitem[{\citenamefont{Sabapathy et~al.}(2019)\citenamefont{Sabapathy, Qi,
  Izaac, and Weedbrook}}]{sabapathy2018near}
\bibinfo{author}{\bibfnamefont{K.~K.} \bibnamefont{Sabapathy}},
  \bibinfo{author}{\bibfnamefont{H.}~\bibnamefont{Qi}},
  \bibinfo{author}{\bibfnamefont{J.}~\bibnamefont{Izaac}}, \bibnamefont{and}
  \bibinfo{author}{\bibfnamefont{C.}~\bibnamefont{Weedbrook}},
  \bibinfo{journal}{Physical Review A} \textbf{\bibinfo{volume}{100}},
  \bibinfo{pages}{012326} (\bibinfo{year}{2019}).

\bibitem[{\citenamefont{Su et~al.}(2019)\citenamefont{Su, Myers, and
  Sabapathy}}]{su2019generation}
\bibinfo{author}{\bibfnamefont{D.}~\bibnamefont{Su}},
  \bibinfo{author}{\bibfnamefont{C.~R.} \bibnamefont{Myers}}, \bibnamefont{and}
  \bibinfo{author}{\bibfnamefont{K.~K.} \bibnamefont{Sabapathy}},
  \bibinfo{journal}{arXiv preprint arXiv:1902.02331}  (\bibinfo{year}{2019}).

\bibitem[{\citenamefont{Gagatsos and Guha}(2019)}]{gagatsos2019efficient}
\bibinfo{author}{\bibfnamefont{C.~N.} \bibnamefont{Gagatsos}} \bibnamefont{and}
  \bibinfo{author}{\bibfnamefont{S.}~\bibnamefont{Guha}},
  \bibinfo{journal}{Physical Review A} \textbf{\bibinfo{volume}{99}},
  \bibinfo{pages}{053816} (\bibinfo{year}{2019}).

\bibitem[{\citenamefont{Hamilton et~al.}(2017)\citenamefont{Hamilton, Kruse,
  Sansoni, Barkhofen, Silberhorn, and Jex}}]{hamilton2017gaussian}
\bibinfo{author}{\bibfnamefont{C.~S.} \bibnamefont{Hamilton}},
  \bibinfo{author}{\bibfnamefont{R.}~\bibnamefont{Kruse}},
  \bibinfo{author}{\bibfnamefont{L.}~\bibnamefont{Sansoni}},
  \bibinfo{author}{\bibfnamefont{S.}~\bibnamefont{Barkhofen}},
  \bibinfo{author}{\bibfnamefont{C.}~\bibnamefont{Silberhorn}},
  \bibnamefont{and} \bibinfo{author}{\bibfnamefont{I.}~\bibnamefont{Jex}},
  \bibinfo{journal}{Physical Review Letters} \textbf{\bibinfo{volume}{119}},
  \bibinfo{pages}{170501} (\bibinfo{year}{2017}).

\bibitem[{\citenamefont{Reck et~al.}(1994)\citenamefont{Reck, Zeilinger,
  Bernstein, and Bertani}}]{reck1994experimental}
\bibinfo{author}{\bibfnamefont{M.}~\bibnamefont{Reck}},
  \bibinfo{author}{\bibfnamefont{A.}~\bibnamefont{Zeilinger}},
  \bibinfo{author}{\bibfnamefont{H.~J.} \bibnamefont{Bernstein}},
  \bibnamefont{and} \bibinfo{author}{\bibfnamefont{P.}~\bibnamefont{Bertani}},
  \bibinfo{journal}{Physical Review Letters} \textbf{\bibinfo{volume}{73}},
  \bibinfo{pages}{58} (\bibinfo{year}{1994}).

\bibitem[{\citenamefont{Clements et~al.}(2016)\citenamefont{Clements,
  Humphreys, Metcalf, Kolthammer, and Walmsley}}]{clements2016optimal}
\bibinfo{author}{\bibfnamefont{W.~R.} \bibnamefont{Clements}},
  \bibinfo{author}{\bibfnamefont{P.~C.} \bibnamefont{Humphreys}},
  \bibinfo{author}{\bibfnamefont{B.~J.} \bibnamefont{Metcalf}},
  \bibinfo{author}{\bibfnamefont{W.~S.} \bibnamefont{Kolthammer}},
  \bibnamefont{and} \bibinfo{author}{\bibfnamefont{I.~A.}
  \bibnamefont{Walmsley}}, \bibinfo{journal}{Optica}
  \textbf{\bibinfo{volume}{3}}, \bibinfo{pages}{1460} (\bibinfo{year}{2016}).

\bibitem[{sup()}]{supp}
\bibinfo{note}{See Supplemental Material at [URL will be inserted by publisher]
  for details.}

\bibitem[{\citenamefont{Arkhipov and Kuperberg}(2012)}]{arkhipov2012bosonic}
\bibinfo{author}{\bibfnamefont{A.}~\bibnamefont{Arkhipov}} \bibnamefont{and}
  \bibinfo{author}{\bibfnamefont{G.}~\bibnamefont{Kuperberg}},
  \bibinfo{journal}{Geometry \& Topology Monographs}
  \textbf{\bibinfo{volume}{18}}, \bibinfo{pages}{1} (\bibinfo{year}{2012}).

\bibitem[{\citenamefont{Bremner et~al.}(2010)\citenamefont{Bremner, Jozsa, and
  Shepherd}}]{bremner2010classical}
\bibinfo{author}{\bibfnamefont{M.~J.} \bibnamefont{Bremner}},
  \bibinfo{author}{\bibfnamefont{R.}~\bibnamefont{Jozsa}}, \bibnamefont{and}
  \bibinfo{author}{\bibfnamefont{D.~J.} \bibnamefont{Shepherd}},
  \bibinfo{journal}{Proceedings of the Royal Society A: Mathematical, Physical
  and Engineering Sciences} \textbf{\bibinfo{volume}{467}},
  \bibinfo{pages}{459} (\bibinfo{year}{2010}).

\bibitem[{\citenamefont{Lund et~al.}(2014)\citenamefont{Lund, Laing,
  Rahimi-Keshari, Rudolph, O'Brien, and Ralph}}]{Lund2014SBS}
\bibinfo{author}{\bibfnamefont{A.~P.} \bibnamefont{Lund}},
  \bibinfo{author}{\bibfnamefont{A.}~\bibnamefont{Laing}},
  \bibinfo{author}{\bibfnamefont{S.}~\bibnamefont{Rahimi-Keshari}},
  \bibinfo{author}{\bibfnamefont{T.}~\bibnamefont{Rudolph}},
  \bibinfo{author}{\bibfnamefont{J.~L.} \bibnamefont{O'Brien}},
  \bibnamefont{and} \bibinfo{author}{\bibfnamefont{T.~C.} \bibnamefont{Ralph}},
  \bibinfo{journal}{Physical Review Letters} \textbf{\bibinfo{volume}{113}},
  \bibinfo{pages}{100502} (\bibinfo{year}{2014}).

\bibitem[{\citenamefont{Aaronson}()}]{Scottblog}
\bibinfo{author}{\bibfnamefont{S.}~\bibnamefont{Aaronson}},
  \bibinfo{howpublished}{\url{http://www.scottaaronson.com/blog/?p=1579}}.

\bibitem[{\citenamefont{Brod}(2015)}]{Brod2015depth}
\bibinfo{author}{\bibfnamefont{D.~J.} \bibnamefont{Brod}},
  \bibinfo{journal}{Physical Review A} \textbf{\bibinfo{volume}{91}},
  \bibinfo{pages}{042316} (\bibinfo{year}{2015}).

\bibitem[{\citenamefont{Marian et~al.}(2002)\citenamefont{Marian, Marian, and
  Scutaru}}]{marian2002quantifying}
\bibinfo{author}{\bibfnamefont{P.}~\bibnamefont{Marian}},
  \bibinfo{author}{\bibfnamefont{T.~A.} \bibnamefont{Marian}},
  \bibnamefont{and} \bibinfo{author}{\bibfnamefont{H.}~\bibnamefont{Scutaru}},
  \bibinfo{journal}{Physical Review Letters} \textbf{\bibinfo{volume}{88}},
  \bibinfo{pages}{153601} (\bibinfo{year}{2002}).

\bibitem[{\citenamefont{Spedalieri et~al.}(2012)\citenamefont{Spedalieri,
  Weedbrook, and Pirandola}}]{spedalieri2012limit}
\bibinfo{author}{\bibfnamefont{G.}~\bibnamefont{Spedalieri}},
  \bibinfo{author}{\bibfnamefont{C.}~\bibnamefont{Weedbrook}},
  \bibnamefont{and}
  \bibinfo{author}{\bibfnamefont{S.}~\bibnamefont{Pirandola}},
  \bibinfo{journal}{Journal of Physics A: Mathematical and Theoretical}
  \textbf{\bibinfo{volume}{46}}, \bibinfo{pages}{025304}
  (\bibinfo{year}{2012}).

\bibitem[{\citenamefont{M\"{u}ller-Lennert
  et~al.}(2013)\citenamefont{M\"{u}ller-Lennert, Dupuis, Szehr, Fehr, and
  Tomamichel}}]{muller2013on}
\bibinfo{author}{\bibfnamefont{M.}~\bibnamefont{M\"{u}ller-Lennert}},
  \bibinfo{author}{\bibfnamefont{F.}~\bibnamefont{Dupuis}},
  \bibinfo{author}{\bibfnamefont{O.}~\bibnamefont{Szehr}},
  \bibinfo{author}{\bibfnamefont{S.}~\bibnamefont{Fehr}}, \bibnamefont{and}
  \bibinfo{author}{\bibfnamefont{M.}~\bibnamefont{Tomamichel}},
  \bibinfo{journal}{Journal of Mathematical Physics}
  \textbf{\bibinfo{volume}{54}}, \bibinfo{pages}{122203}
  (\bibinfo{year}{2013}).

\bibitem[{\citenamefont{Wilde et~al.}(2014)\citenamefont{Wilde, Winter, and
  Yang}}]{wilde2014strong}
\bibinfo{author}{\bibfnamefont{M.~M.} \bibnamefont{Wilde}},
  \bibinfo{author}{\bibfnamefont{A.}~\bibnamefont{Winter}}, \bibnamefont{and}
  \bibinfo{author}{\bibfnamefont{D.}~\bibnamefont{Yang}},
  \bibinfo{journal}{Communications in Mathematical Physics}
  \textbf{\bibinfo{volume}{331}}, \bibinfo{pages}{593} (\bibinfo{year}{2014}).

\bibitem[{\citenamefont{Frank and Lieb}(2013)}]{frank2013monotonicity}
\bibinfo{author}{\bibfnamefont{R.~L.} \bibnamefont{Frank}} \bibnamefont{and}
  \bibinfo{author}{\bibfnamefont{E.~H.} \bibnamefont{Lieb}},
  \bibinfo{journal}{Journal of Mathematical Physics}
  \textbf{\bibinfo{volume}{54}}, \bibinfo{pages}{122201}
  (\bibinfo{year}{2013}).

\bibitem[{\citenamefont{Beigi}(2013)}]{beigi2013sandwiched}
\bibinfo{author}{\bibfnamefont{S.}~\bibnamefont{Beigi}},
  \bibinfo{journal}{Journal of Mathematical Physics}
  \textbf{\bibinfo{volume}{54}}, \bibinfo{pages}{122202}
  (\bibinfo{year}{2013}).

\bibitem[{\citenamefont{Van~Erven and Harremos}(2014)}]{van2014renyi}
\bibinfo{author}{\bibfnamefont{T.}~\bibnamefont{Van~Erven}} \bibnamefont{and}
  \bibinfo{author}{\bibfnamefont{P.}~\bibnamefont{Harremos}},
  \bibinfo{journal}{IEEE Transactions on Information Theory}
  \textbf{\bibinfo{volume}{60}}, \bibinfo{pages}{3797} (\bibinfo{year}{2014}).

\bibitem[{\citenamefont{Seshadreesan et~al.}(2018)\citenamefont{Seshadreesan,
  Lami, and Wilde}}]{seshadreesan2018renyi}
\bibinfo{author}{\bibfnamefont{K.~P.} \bibnamefont{Seshadreesan}},
  \bibinfo{author}{\bibfnamefont{L.}~\bibnamefont{Lami}}, \bibnamefont{and}
  \bibinfo{author}{\bibfnamefont{M.~M.} \bibnamefont{Wilde}},
  \bibinfo{journal}{Journal of Mathematical Physics}
  \textbf{\bibinfo{volume}{59}}, \bibinfo{pages}{072204}
  (\bibinfo{year}{2018}).

\bibitem[{\citenamefont{Wilde et~al.}(2017)\citenamefont{Wilde, Tomamichel,
  Lloyd, and Berta}}]{wilde2017gaussian}
\bibinfo{author}{\bibfnamefont{M.~M.} \bibnamefont{Wilde}},
  \bibinfo{author}{\bibfnamefont{M.}~\bibnamefont{Tomamichel}},
  \bibinfo{author}{\bibfnamefont{S.}~\bibnamefont{Lloyd}}, \bibnamefont{and}
  \bibinfo{author}{\bibfnamefont{M.}~\bibnamefont{Berta}},
  \bibinfo{journal}{Physical review letters} \textbf{\bibinfo{volume}{119}},
  \bibinfo{pages}{120501} (\bibinfo{year}{2017}).

\bibitem[{\citenamefont{Killoran et~al.}(2019)\citenamefont{Killoran, Izaac,
  Quesada, Bergholm, Amy, and Weedbrook}}]{killoran2019strawberry}
\bibinfo{author}{\bibfnamefont{N.}~\bibnamefont{Killoran}},
  \bibinfo{author}{\bibfnamefont{J.}~\bibnamefont{Izaac}},
  \bibinfo{author}{\bibfnamefont{N.}~\bibnamefont{Quesada}},
  \bibinfo{author}{\bibfnamefont{V.}~\bibnamefont{Bergholm}},
  \bibinfo{author}{\bibfnamefont{M.}~\bibnamefont{Amy}}, \bibnamefont{and}
  \bibinfo{author}{\bibfnamefont{C.}~\bibnamefont{Weedbrook}},
  \bibinfo{journal}{Quantum} \textbf{\bibinfo{volume}{3}}, \bibinfo{pages}{129}
  (\bibinfo{year}{2019}).

\bibitem[{Qir()}]{Qirepo}
\bibinfo{howpublished}{\url{https://github.com/dionysos137/GBS_classicality}}.

\bibitem[{\citenamefont{Bremner et~al.}(2016)\citenamefont{Bremner, Montanaro,
  and Shepherd}}]{bremner2016average}
\bibinfo{author}{\bibfnamefont{M.~J.} \bibnamefont{Bremner}},
  \bibinfo{author}{\bibfnamefont{A.}~\bibnamefont{Montanaro}},
  \bibnamefont{and} \bibinfo{author}{\bibfnamefont{D.~J.}
  \bibnamefont{Shepherd}}, \bibinfo{journal}{Phys. Rev. Lett.}
  \textbf{\bibinfo{volume}{117}}, \bibinfo{pages}{080501}
  (\bibinfo{year}{2016}).

\bibitem[{\citenamefont{Neville et~al.}(2017)\citenamefont{Neville, Sparrow,
  Clifford, Johnston, Birchall, Montanaro, and Laing}}]{neville2017classical}
\bibinfo{author}{\bibfnamefont{A.}~\bibnamefont{Neville}},
  \bibinfo{author}{\bibfnamefont{C.}~\bibnamefont{Sparrow}},
  \bibinfo{author}{\bibfnamefont{R.}~\bibnamefont{Clifford}},
  \bibinfo{author}{\bibfnamefont{E.}~\bibnamefont{Johnston}},
  \bibinfo{author}{\bibfnamefont{P.~M.} \bibnamefont{Birchall}},
  \bibinfo{author}{\bibfnamefont{A.}~\bibnamefont{Montanaro}},
  \bibnamefont{and} \bibinfo{author}{\bibfnamefont{A.}~\bibnamefont{Laing}},
  \bibinfo{journal}{Nature Physics} \textbf{\bibinfo{volume}{13}},
  \bibinfo{pages}{1153} (\bibinfo{year}{2017}).

\bibitem[{\citenamefont{Boixo et~al.}(2018)\citenamefont{Boixo, Isakov,
  Smelyanskiy, Babbush, Ding, Jiang, Bremner, Martinis, and
  Neven}}]{boixo2018characterizing}
\bibinfo{author}{\bibfnamefont{S.}~\bibnamefont{Boixo}},
  \bibinfo{author}{\bibfnamefont{S.~V.} \bibnamefont{Isakov}},
  \bibinfo{author}{\bibfnamefont{V.~N.} \bibnamefont{Smelyanskiy}},
  \bibinfo{author}{\bibfnamefont{R.}~\bibnamefont{Babbush}},
  \bibinfo{author}{\bibfnamefont{N.}~\bibnamefont{Ding}},
  \bibinfo{author}{\bibfnamefont{Z.}~\bibnamefont{Jiang}},
  \bibinfo{author}{\bibfnamefont{M.~J.} \bibnamefont{Bremner}},
  \bibinfo{author}{\bibfnamefont{J.~M.} \bibnamefont{Martinis}},
  \bibnamefont{and} \bibinfo{author}{\bibfnamefont{H.}~\bibnamefont{Neven}},
  \bibinfo{journal}{Nature Physics} \textbf{\bibinfo{volume}{14}},
  \bibinfo{pages}{595} (\bibinfo{year}{2018}).

\bibitem[{\citenamefont{Clements et~al.}(2018)\citenamefont{Clements, Renema,
  Eckstein, Valido, Lita, Gerrits, Nam, Kolthammer, Huh, and
  Walmsley}}]{clements2018approximating}
\bibinfo{author}{\bibfnamefont{W.~R.} \bibnamefont{Clements}},
  \bibinfo{author}{\bibfnamefont{J.~J.} \bibnamefont{Renema}},
  \bibinfo{author}{\bibfnamefont{A.}~\bibnamefont{Eckstein}},
  \bibinfo{author}{\bibfnamefont{A.~A.} \bibnamefont{Valido}},
  \bibinfo{author}{\bibfnamefont{A.}~\bibnamefont{Lita}},
  \bibinfo{author}{\bibfnamefont{T.}~\bibnamefont{Gerrits}},
  \bibinfo{author}{\bibfnamefont{S.~W.} \bibnamefont{Nam}},
  \bibinfo{author}{\bibfnamefont{W.~S.} \bibnamefont{Kolthammer}},
  \bibinfo{author}{\bibfnamefont{J.}~\bibnamefont{Huh}}, \bibnamefont{and}
  \bibinfo{author}{\bibfnamefont{I.~A.} \bibnamefont{Walmsley}},
  \bibinfo{journal}{Journal of Physics B: Atomic, Molecular and Optical
  Physics} \textbf{\bibinfo{volume}{51}}, \bibinfo{pages}{245503}
  (\bibinfo{year}{2018}).

\bibitem[{\citenamefont{Paesani et~al.}(2019)\citenamefont{Paesani, Ding,
  Santagati, Chakhmakhchyan, Vigliar, Rottwitt, Oxenl{\o}we, Wang, Thompson,
  and Laing}}]{paesani2018generation}
\bibinfo{author}{\bibfnamefont{S.}~\bibnamefont{Paesani}},
  \bibinfo{author}{\bibfnamefont{Y.}~\bibnamefont{Ding}},
  \bibinfo{author}{\bibfnamefont{R.}~\bibnamefont{Santagati}},
  \bibinfo{author}{\bibfnamefont{L.}~\bibnamefont{Chakhmakhchyan}},
  \bibinfo{author}{\bibfnamefont{C.}~\bibnamefont{Vigliar}},
  \bibinfo{author}{\bibfnamefont{K.}~\bibnamefont{Rottwitt}},
  \bibinfo{author}{\bibfnamefont{L.~K.} \bibnamefont{Oxenl{\o}we}},
  \bibinfo{author}{\bibfnamefont{J.}~\bibnamefont{Wang}},
  \bibinfo{author}{\bibfnamefont{M.~G.} \bibnamefont{Thompson}},
  \bibnamefont{and} \bibinfo{author}{\bibfnamefont{A.}~\bibnamefont{Laing}},
  \bibinfo{journal}{Nature Physics} \textbf{\bibinfo{volume}{15}},
  \bibinfo{pages}{925} (\bibinfo{year}{2019}).

\bibitem[{\citenamefont{Zhong et~al.}(2019)\citenamefont{Zhong, Peng, Li, Hu,
  Li, Qin, Wu, Zhang, Li, Zhang et~al.}}]{zhong2019experimental}
\bibinfo{author}{\bibfnamefont{H.-S.} \bibnamefont{Zhong}},
  \bibinfo{author}{\bibfnamefont{L.-C.} \bibnamefont{Peng}},
  \bibinfo{author}{\bibfnamefont{Y.}~\bibnamefont{Li}},
  \bibinfo{author}{\bibfnamefont{Y.}~\bibnamefont{Hu}},
  \bibinfo{author}{\bibfnamefont{W.}~\bibnamefont{Li}},
  \bibinfo{author}{\bibfnamefont{J.}~\bibnamefont{Qin}},
  \bibinfo{author}{\bibfnamefont{D.}~\bibnamefont{Wu}},
  \bibinfo{author}{\bibfnamefont{W.}~\bibnamefont{Zhang}},
  \bibinfo{author}{\bibfnamefont{H.}~\bibnamefont{Li}},
  \bibinfo{author}{\bibfnamefont{L.}~\bibnamefont{Zhang}},
  \bibnamefont{et~al.}, \bibinfo{journal}{Science Bulletin}
  \textbf{\bibinfo{volume}{64}}, \bibinfo{pages}{511} (\bibinfo{year}{2019}).

\bibitem[{\citenamefont{Hadfield et~al.}(2005)\citenamefont{Hadfield, Stevens,
  Gruber, Miller, Schwall, Mirin, and Nam}}]{hadfield2005single}
\bibinfo{author}{\bibfnamefont{R.~H.} \bibnamefont{Hadfield}},
  \bibinfo{author}{\bibfnamefont{M.~J.} \bibnamefont{Stevens}},
  \bibinfo{author}{\bibfnamefont{S.~S.} \bibnamefont{Gruber}},
  \bibinfo{author}{\bibfnamefont{A.~J.} \bibnamefont{Miller}},
  \bibinfo{author}{\bibfnamefont{R.~E.} \bibnamefont{Schwall}},
  \bibinfo{author}{\bibfnamefont{R.~P.} \bibnamefont{Mirin}}, \bibnamefont{and}
  \bibinfo{author}{\bibfnamefont{S.~W.} \bibnamefont{Nam}},
  \bibinfo{journal}{Optics Express} \textbf{\bibinfo{volume}{13}},
  \bibinfo{pages}{10846} (\bibinfo{year}{2005}).

\bibitem[{\citenamefont{Gao and Duan}(2018)}]{gao2018efficient}
\bibinfo{author}{\bibfnamefont{X.}~\bibnamefont{Gao}} \bibnamefont{and}
  \bibinfo{author}{\bibfnamefont{L.}~\bibnamefont{Duan}},
  \bibinfo{journal}{arXiv preprint arXiv:1810.03176}  (\bibinfo{year}{2018}).

\bibitem[{\citenamefont{Weedbrook et~al.}(2012)\citenamefont{Weedbrook,
  Pirandola, Garc{\'\i}a-Patr{\'o}n, Cerf, Ralph, Shapiro, and
  Lloyd}}]{weedbrook2012gaussian}
\bibinfo{author}{\bibfnamefont{C.}~\bibnamefont{Weedbrook}},
  \bibinfo{author}{\bibfnamefont{S.}~\bibnamefont{Pirandola}},
  \bibinfo{author}{\bibfnamefont{R.}~\bibnamefont{Garc{\'\i}a-Patr{\'o}n}},
  \bibinfo{author}{\bibfnamefont{N.~J.} \bibnamefont{Cerf}},
  \bibinfo{author}{\bibfnamefont{T.~C.} \bibnamefont{Ralph}},
  \bibinfo{author}{\bibfnamefont{J.~H.} \bibnamefont{Shapiro}},
  \bibnamefont{and} \bibinfo{author}{\bibfnamefont{S.}~\bibnamefont{Lloyd}},
  \bibinfo{journal}{Reviews of Modern Physics} \textbf{\bibinfo{volume}{84}},
  \bibinfo{pages}{621} (\bibinfo{year}{2012}).

\bibitem[{\citenamefont{Vernon et~al.}(2018)\citenamefont{Vernon, Quesada,
  Liscidini, Morrison, Menotti, Tan, and Sipe}}]{vernon2018scalable}
\bibinfo{author}{\bibfnamefont{Z.}~\bibnamefont{Vernon}},
  \bibinfo{author}{\bibfnamefont{N.}~\bibnamefont{Quesada}},
  \bibinfo{author}{\bibfnamefont{M.}~\bibnamefont{Liscidini}},
  \bibinfo{author}{\bibfnamefont{B.}~\bibnamefont{Morrison}},
  \bibinfo{author}{\bibfnamefont{M.}~\bibnamefont{Menotti}},
  \bibinfo{author}{\bibfnamefont{K.}~\bibnamefont{Tan}}, \bibnamefont{and}
  \bibinfo{author}{\bibfnamefont{J.}~\bibnamefont{Sipe}},
  \bibinfo{journal}{arXiv:1807.00044}  (\bibinfo{year}{2018}).

\bibitem[{\citenamefont{Quesada et~al.}(2018)\citenamefont{Quesada, Arrazola,
  and Killoran}}]{quesada2018gaussian}
\bibinfo{author}{\bibfnamefont{N.}~\bibnamefont{Quesada}},
  \bibinfo{author}{\bibfnamefont{J.~M.} \bibnamefont{Arrazola}},
  \bibnamefont{and} \bibinfo{author}{\bibfnamefont{N.}~\bibnamefont{Killoran}},
  \bibinfo{journal}{Physical Review A} \textbf{\bibinfo{volume}{98}},
  \bibinfo{pages}{062322} (\bibinfo{year}{2018}).

\bibitem[{\citenamefont{Barnett et~al.}(1998)\citenamefont{Barnett, Phillips,
  and Pegg}}]{barnett1998imperfect}
\bibinfo{author}{\bibfnamefont{S.~M.} \bibnamefont{Barnett}},
  \bibinfo{author}{\bibfnamefont{L.~S.} \bibnamefont{Phillips}},
  \bibnamefont{and} \bibinfo{author}{\bibfnamefont{D.~T.} \bibnamefont{Pegg}},
  \bibinfo{journal}{Optics Communications} \textbf{\bibinfo{volume}{158}},
  \bibinfo{pages}{45} (\bibinfo{year}{1998}).

\bibitem[{\citenamefont{Marian and Marian}(1993)}]{marian1993squeezed}
\bibinfo{author}{\bibfnamefont{P.}~\bibnamefont{Marian}} \bibnamefont{and}
  \bibinfo{author}{\bibfnamefont{T.~A.} \bibnamefont{Marian}},
  \bibinfo{journal}{Physical Review A} \textbf{\bibinfo{volume}{47}},
  \bibinfo{pages}{4474} (\bibinfo{year}{1993}).

\bibitem[{\citenamefont{Holevo}(1975)}]{holevo1975some}
\bibinfo{author}{\bibfnamefont{A.}~\bibnamefont{Holevo}},
  \bibinfo{journal}{IEEE Transactions on Information Theory}
  \textbf{\bibinfo{volume}{21}}, \bibinfo{pages}{533} (\bibinfo{year}{1975}).

\bibitem[{\citenamefont{Moylett et~al.}(2019)\citenamefont{Moylett,
  Garc{\'\i}a-Patr{\'o}n, Renema, and Turner}}]{moylett2019classically}
\bibinfo{author}{\bibfnamefont{A.~E.} \bibnamefont{Moylett}},
  \bibinfo{author}{\bibfnamefont{R.}~\bibnamefont{Garc{\'\i}a-Patr{\'o}n}},
  \bibinfo{author}{\bibfnamefont{J.~J.} \bibnamefont{Renema}},
  \bibnamefont{and} \bibinfo{author}{\bibfnamefont{P.}~\bibnamefont{Turner}},
  \bibinfo{journal}{Quantum Science and Technology}  (\bibinfo{year}{2019}).

\bibitem[{\citenamefont{Brod and Oszmaniec}(2019)}]{brod2019nonuniform}
\bibinfo{author}{\bibfnamefont{D.~J.} \bibnamefont{Brod}} \bibnamefont{and}
  \bibinfo{author}{\bibfnamefont{M.}~\bibnamefont{Oszmaniec}},
  \bibinfo{journal}{arXiv:1906.06696}  (\bibinfo{year}{2019}).

\bibitem[{\citenamefont{Zhong et~al.}(2018)\citenamefont{Zhong, Li, Li, Peng,
  Su, Hu, He, Ding, Zhang, Li et~al.}}]{zhong201812}
\bibinfo{author}{\bibfnamefont{H.-S.} \bibnamefont{Zhong}},
  \bibinfo{author}{\bibfnamefont{Y.}~\bibnamefont{Li}},
  \bibinfo{author}{\bibfnamefont{W.}~\bibnamefont{Li}},
  \bibinfo{author}{\bibfnamefont{L.-C.} \bibnamefont{Peng}},
  \bibinfo{author}{\bibfnamefont{Z.-E.} \bibnamefont{Su}},
  \bibinfo{author}{\bibfnamefont{Y.}~\bibnamefont{Hu}},
  \bibinfo{author}{\bibfnamefont{Y.-M.} \bibnamefont{He}},
  \bibinfo{author}{\bibfnamefont{X.}~\bibnamefont{Ding}},
  \bibinfo{author}{\bibfnamefont{W.}~\bibnamefont{Zhang}},
  \bibinfo{author}{\bibfnamefont{H.}~\bibnamefont{Li}}, \bibnamefont{et~al.},
  \bibinfo{journal}{Physical review letters} \textbf{\bibinfo{volume}{121}},
  \bibinfo{pages}{250505} (\bibinfo{year}{2018}).

\bibitem[{\citenamefont{Wang et~al.}(2019)\citenamefont{Wang, Qin, Ding, Chen,
  Chen, You, He, Jiang, Wang, You et~al.}}]{wang2019boson}
\bibinfo{author}{\bibfnamefont{H.}~\bibnamefont{Wang}},
  \bibinfo{author}{\bibfnamefont{J.}~\bibnamefont{Qin}},
  \bibinfo{author}{\bibfnamefont{X.}~\bibnamefont{Ding}},
  \bibinfo{author}{\bibfnamefont{M.-C.} \bibnamefont{Chen}},
  \bibinfo{author}{\bibfnamefont{S.}~\bibnamefont{Chen}},
  \bibinfo{author}{\bibfnamefont{X.}~\bibnamefont{You}},
  \bibinfo{author}{\bibfnamefont{Y.-M.} \bibnamefont{He}},
  \bibinfo{author}{\bibfnamefont{X.}~\bibnamefont{Jiang}},
  \bibinfo{author}{\bibfnamefont{Z.}~\bibnamefont{Wang}},
  \bibinfo{author}{\bibfnamefont{L.}~\bibnamefont{You}}, \bibnamefont{et~al.},
  \bibinfo{journal}{arXiv preprint arXiv:1910.09930}  (\bibinfo{year}{2019}).

\bibitem[{\citenamefont{Aaronson and
  Arkhipov}(2011)}]{aaronson2011computational}
\bibinfo{author}{\bibfnamefont{S.}~\bibnamefont{Aaronson}} \bibnamefont{and}
  \bibinfo{author}{\bibfnamefont{A.}~\bibnamefont{Arkhipov}}, in
  \emph{\bibinfo{booktitle}{Proceedings of the forty-third annual ACM symposium
  on Theory of computing}} (\bibinfo{organization}{ACM}, \bibinfo{year}{2011}),
  pp. \bibinfo{pages}{333--342}.

\end{thebibliography}

\cleardoublepage
\onecolumngrid
\begin{center}
	{\huge Supplemental materials}
\end{center}

 \section{Non-collision regime for GBS}
For standard boson sampling with $N$ single-photon inputs and $M$ modes, it was shown \cite{aaronson2011computational,arkhipov2012bosonic} that the probability of detecting collision events is bounded as follows:
\begin{align}\label{eq:p-collision}
\ave{P_{\text{collision}}}_\U \leq \frac{8N^2}{M}~,
\end{align}
where the average is over Haar-random unitaries. The proof of Eq.~\eqref{eq:p-collision} relies on the fact that Haar-random unitaries map any $N$-photon, $M$-mode state onto the maximally mixed state (its density matrix is given by the identity on the corresponding Hilbert space). 

For GBS, the input state has indefinite photon number. Specifically, the probability of generating $S$ photon pairs is given by \cite{hamilton2017gaussian}
\begin{align}\label{eq:neg-bin}
F(S) = \binom{\frac{K}{2}+S-1}{S}\text{sech}^K(r)\tanh^S(r)~.
\end{align}
Therefore, from Eq.~\eqref{eq:p-collision}, the probability of detecting collision events at the output of GBS satisfies
\begin{align}
\ave{P_{\text{collision}}}_\U &\leq \sum_{S=0}^\infty F(S)\sbrac{\frac{8(2S)^2}{M}} = \frac{32}{M}\ave{S^2}_F~.
\end{align}

$F(S)$ in Eq.~\eqref{eq:neg-bin} is a negative binomial (or Pascal) distribution. In the large $K$ limit, $F(S)$ converges to a Gaussian distribution with mean value
$
\frac{K}{2}\sinh^2r
$
and variance
$
\frac{K}{2}\sinh^2r\cosh^2r
$, which gives $\ave{S^2}_F=O(K^2)$.
Therefore, we also expect no-collision outputs in GBS to dominate whenever
\begin{align}
M=O(K^2)~.
\end{align}

\section{Exact lossy GBS is hard}
\label{apx:exact-simulation}

In this section we give evidence that exact classical simulation of a lossy GBS device cannot be efficient, unless the polynomial hierarchy collapses to its third level. The post-selection based argument we use is standard and was used to prove similar claims for many different restricted models of quantum computation (see e.g.\ \cite{bremner2010classical,aaronson2011computational}), so we only detail the parts of the argument that pertain to GBS. The construction we use is directly inspired by the scattershot boson sampling model \cite{Lund2014SBS,Scottblog}, though our purpose is different, as we are interested especially in the effect of losses and the complexity of the model. 

For this proof, we assume that losses are uniform within the interferometer $\bm{U}$, and so we can move all losses to the end (this is a standard assumption that is a good approximation for e.g.\ integrated photonic devices, but was also shown to hold under more general conditions \cite{oszmaniec2018classical}). In contrast to the results in the main paper, here we also ignore all other sources of losses. This is an important caveat, but can be justified as losses in photon sources and detectors are effectively constant, whereas losses inside the interferometer $\bm{U}$ scale with its depth (which, for boson sampling, also typically scales with the number of photons). Therefore, photon loss within the linear optical network is the main scalability botteleneck. We leave it as an open question whether this caveat can be eliminated.

\begin{theorem}\label{thm:exact-GBS}
	If there is an efficient classical algorithm to sample from the output distribution of a lossy Gaussian boson sampling instance exactly (or up to multiplicative error), then the polynomial hierarchy collapses to its third level. 
\end{theorem}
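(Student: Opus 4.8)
The plan is to follow the by-now-standard postselection template (as in Aaronson--Arkhipov and Bremner--Jozsa--Shepherd), treating the generic complexity-theoretic machinery as a black box and supplying only the ingredients specific to \emph{lossy} GBS. Concretely, I would show that postselected lossy GBS --- call its language class $\mathrm{PostGBS}$, where ``postselection'' means conditioning the output on an efficiently recognizable (possibly exponentially unlikely) event --- is hard for postselected quantum computation, $\mathrm{PostBQP}\subseteq\mathrm{PostGBS}$. Since an exact or multiplicative-error classical sampler, run through the \emph{same} postselection, places $\mathrm{PostGBS}$ inside $\mathrm{PostBPP}$, these two facts sandwich into $\mathrm{PostBQP}\subseteq\mathrm{PostBPP}$. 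Invoking Aaronson's identity $\mathrm{PostBQP}=\mathrm{PP}$, the standard containment of $\mathrm{PostBPP}$ in the third level of the hierarchy, and Toda's theorem ($\mathrm{PH}\subseteq\mathrm{P}^{\mathrm{PP}}$), the polynomial hierarchy is forced to collapse to its third level.

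The first GBS-specific step is to realize a universal postselected linear-optical computation starting from single-mode squeezed inputs. Here I would exploit the scattershot construction: a balanced beam splitter applied to two single-mode squeezed vacua (squeezed along conjugate quadratures) produces a two-mode squeezed vacuum, and detecting a photon in one arm heralds a single photon in the other. Because such beam splitters are available inside the interferometer $\bm{U}$, postselecting on a fixed heralding pattern prepares single photons in a known set of input modes, reducing GBS to (scattershot) Boson Sampling on those modes. Feeding the heralded photons into a KLM-style network and postselecting on the appropriate ancilla outcomes then yields $\mathrm{PostBQP}$ universality exactly as in the lossless case.

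The crux --- and the feature that separates this statement from existing Boson Sampling hardness results --- is handling \emph{arbitrary} loss. My plan is to note that uniform loss acts as an independent attenuation channel on each photon, so for an $N$-photon run the event ``no photon was lost'' has probability $\eta^{N}>0$; although exponentially small, it is strictly nonzero and efficiently recognizable from the detection record (the number of clicks matches the number prepared). Conditioning on this no-loss event returns \emph{exactly} the ideal lossless distribution, so postselection absorbs loss at no complexity cost. This is also why the theorem demands exact or multiplicative (rather than total-variation) error: multiplicative closeness survives conditioning --- if $\tilde P$ approximates $P$ to within a multiplicative constant then so does $\tilde P(\,\cdot\mid\text{no loss})$, which is enough to decide $\mathrm{PostBPP}$ languages --- whereas an additive guarantee would be destroyed by conditioning on an exponentially rare event.

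The main obstacle I anticipate is not the generic collapse but making this loss-handling step rigorous: one must verify that the no-loss postselection is compatible with the heralding postselection, i.e.\ that the \emph{combined} conditioning event remains efficiently recognizable and retains nonzero probability, and that the scattershot embedding survives once all sources, heralds, and no-loss conditions are imposed simultaneously. A secondary technical point is checking that the multiplicative-error guarantee on the full, loss-broadened output distribution genuinely transfers to the doubly-conditioned distribution used in the reduction, so that the final chain $\mathrm{PP}=\mathrm{PostBQP}\subseteq\mathrm{PostGBS}\subseteq\mathrm{PostBPP}$ closes cleanly.
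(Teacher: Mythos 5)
Your proposal is correct and follows essentially the same route as the paper's proof: a scattershot-style construction in which pairs of squeezed vacua are combined on 50:50 beam splitters to produce two-mode squeezed vacuum, heralding detectors postselect single-photon inputs to $\bm{U}$, a further postselection on the total output photon count eliminates loss events, and the standard $\mathrm{PostBQP}=\mathrm{PP}$ / Toda collapse argument is then invoked via postselected linear-optical universality. The only notable (and welcome) additions are your explicit remarks that the combined heralding-plus-no-loss event has nonzero probability and that multiplicative error survives conditioning, points the paper delegates to the ``standard argument.''
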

\begin{proof}
	Consider the following lossy GBS setup. We prepare $2K$ identical SMSV states with identical squeezing paramater $r$. These states are input, in pairs, into 50:50 beam splitters, generating two-mode squeezed vacuum (TMSV) states of the form 
	\begin{align}\label{eq:SMSV}
	\text{sech}(r)\sum_{S=0}^\infty  (\tanh r)^{S}\ket{2S,2S}~.
	\end{align}
	 For each TMSV state we couple one mode directly to a number-resolving detector (these are the heralding registers H), whereas the other half are sent into the lossy interferometer $\bm{U}$ (which may also require some additional vacuum inputs). The detectors at the output of $\bm{U}$ are called the boson sampling registers R. This entire setup is shown in Fig.~\ref{fig:exact-GBS-proof}.
	
	\begin{figure}[h]
		\includegraphics[scale=0.5]{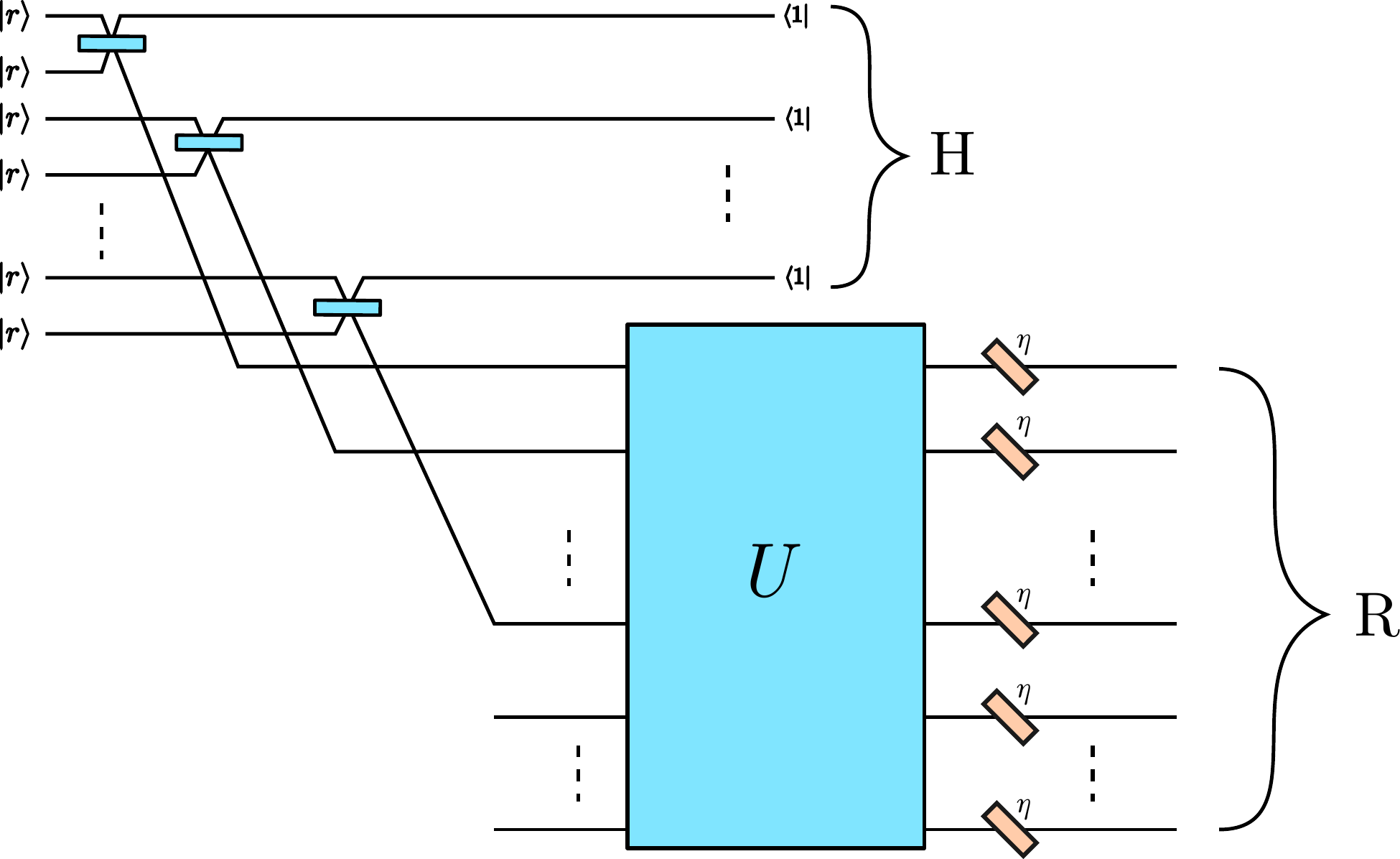}
		\caption{Lossy GBS setup used in the proof of Theorem~\ref{thm:exact-GBS}.}
		\label{fig:exact-GBS-proof}
	\end{figure}
	
	We now run this device, post-selecting on outcomes that satisfy two properties:
	\begin{itemize}
		\item[(i)] Exactly one photon is observed in each of the heralding modes H, and
		\item[(ii)] There are exactly $n$ photons in total in the R registers.
	\end{itemize}
	
	The two properties above guarantee that, in every event accepted by the post-selection, exactly one photon was injected into each non-vacuum input of $\bm{U}$ [due to the form of Eq.\ (\ref{eq:SMSV})], and no photons were lost within $\U$. Therefore, the resulting conditional probability distribution is the same as an ideal boson sampling instance with single-photon inputs and interferometer $\bm{U}$. 
	
	Now note that standard boson sampling, when augmented with the power of post-selection, can perform universal quantum computation \cite{aaronson2011computational}. Thus, by choosing the interferometer $\bm{U}$ properly, the same is true for the device of Fig.~\ref{fig:exact-GBS-proof}. From this it immediately follows that, by a standard argument (see e.g.\ \cite{bremner2010classical}), there can be no efficient classical algorithm to simulate the output distribution of the lossy GBS device exactly (or up to multiplicative error), otherwise the polynomial hierarchy collapses to its third level. 
\end{proof}

It is a well-accepted complexity-theoretic conjecture that the polynomial hierarchy is infinite, and so Theorem~\ref{thm:exact-GBS} can be taken as evidence that an efficient classical algorithm which exactly simulates a lossy GBS device does not exist. Note that the theorem did not require any assumptions on the strength of either losses or squeezing, so it holds for any squeezing parameter $r > 0$ and any interferometer transmissivity $\eta > 0$. Interestingly, if we replace $\bm{U}$ by the construction described in \cite{Brod2015depth}, Theorem \ref{thm:exact-GBS} also proves that GBS (lossy or not) is hard to simulate even if the entire linear-optical sector of Fig.~\ref{fig:exact-GBS-proof} has only five layers of (long-range) beam splitters.

Just like previous similar results~\cite{aaronson2011computational,rahimi2016sufficient}, Theorem~\ref{thm:exact-GBS} is not too relevant in a realistic scenario. The requirement of simulating the output distribution exactly (or with multiplicative error) is too strict, since a realistic device with experimental imperfections is not simulating the idealized device to that precision either. At best, Theorem~\ref{thm:exact-GBS} places bounds on how far a proposed efficient classical algorithm can be extended (for example, it shows that any exact simulation of lossy GBS based on the algorithm of \cite{rahimi2016sufficient} can only be efficient in the presence of dark counts).

\section{Mapping Squeezed lossy states to squeezed thermal states}
\label{sec:STS}
Any zero-mean single-mode Gaussian state can be decomposed into a squeezed thermal state $\rho=S(s_\rho,\phi_\rho)\rho_T^{n_\rho}S^\dagger(s_\rho,\phi_\rho)$~\cite{marian1993squeezed}. Here $\rho_T^{n_\rho}$ is a thermal state with average photon number $n_\rho$, and its covariance matrix is given by $(2n_\rho+1)\mb{I}_2$. $S(r,\phi)$ is the Stoler squeezing operator,
\begin{align}
S(s,\phi)=\exp\sbrac{\frac{1}{2}se^{i\phi}a^{\dagger 2}-\frac{1}{2}se^{-i\phi}a^{2}}.
\end{align}
Its transformation on the quadratures $x$ and $p$ is given by a symplectic matrix
\begin{align}
\begin{pmatrix}
\cosh r+\cos\phi\sinh r & -\sin\phi\sinh r\\
 -\sin\phi\sinh r & \cosh r-\cos\phi\sinh r 
\end{pmatrix}~.
\end{align} 
The covariance matrix of $\rho$ can be written as
\begin{align}\label{eq:cv-STS}
\mb{V}_\rho=\mb{V}_{STS}(s_\rho,n_\rho,\phi_\rho)=(2n_\rho+1)
\begin{pmatrix}
\cosh 2s_\rho-\cos\phi_\rho\sinh 2s_\rho & -\sin\phi_\rho\sinh 2s_\rho\\
-\sin\phi_\rho\sinh 2s_\rho & \cosh s_\rho+\cos\phi_\rho\sinh 2s_\rho 
\end{pmatrix}~.
\end{align}

Recall that a single-mode Gaussian state is $(t)$-classical when $\mb{V}_\rho-t\mb{I}_2$ is positive definite. From Eq.~\eqref{eq:cv-STS} we know that this happens when
\begin{align}
    s_\rho\leq \frac{1}{2}\ln\frac{2n_\rho+1}{t}~.
\end{align}

For a lossy squeezed state $\sigma$ with $\mb{V}_\sigma=\diag\brac{a_+,a_-}$ where $a_\pm=\eta e^{\pm2r}+(1-\eta)$, we can easily solve $s_\sigma$ and $n_\sigma$ by directly comparing $\mb{V}_\sigma$ with Eq.~\eqref{eq:cv-STS}, which gives
\begin{align}\label{eq-supp:sigma-s}
s_\sigma&=\frac{1}{4}\ln\frac{a_+}{a_-}~,\\
\label{eq-supp:sigma-n}
n_\sigma&=\frac{1}{2}(\sqrt{a_+a_-}-1)~.
\end{align}

\section{Sufficient conditions derived by using the quantum fidelity}
\label{sec:opt-fid}
{In the main text we derive the following bound by using the quantum fidelity,
\begin{align}\label{eq-supp:sufficient-condition-fid}
-\ln\sbrac{F_{\max}(\eta,q_D)}\leq \frac{\epsilon^2}{4K}~,
\end{align}
where
\begin{align}
F_{\max}(\eta,q_D):=\max_{t\in[1-2q_D,1]}\max_{\tau\in \C_{G}^{(t)}}F(\sigma,\tau)~.
\end{align}
We  explicitly calculate this quantity in what follows.

We start by computing the quantum fidelity between the lossy squeezed state $\sigma$ and a $(t)$-classical Gaussian state $\tau$. The fidelity between two single-mode Gaussian states is given by ~\cite{spedalieri2012limit}
\begin{align}\label{eq-supp:fidelity-Gaussian}
F(\sigma,\tau)=\frac{1}{\sqrt{\Delta+\Lambda}-\sqrt{\Lambda}}~,
\end{align}
where $\Delta=\frac{1}{4}\det(\mb{V}_\sigma+\mb{V}_\tau)$ and $\Lambda=\frac{1}{4}(\det \mb{V}_\sigma-1)(\det \mb{V}_\tau-1)$. To optimize Eq.~\eqref{eq-supp:fidelity-Gaussian} we make use of the STS parameterization for $\sigma$ and $\tau$. 
The fidelity is minimized when the squeezing axes are aligned \cite{marian1993squeezed}, so we can set $\phi_\tau=0$ for simplicity. Straightforward calculations then lead to
\begin{align}\label{eq:Delta}
\Delta&=(n_\sigma-n_\tau)^2+(2n_\sigma+1)(2n_\tau+1)\cosh^2(s_\sigma-s_\tau)~,\\\label{eq:Lambda}
\Lambda&=4n_\sigma(n_\sigma+1)n_\tau(n_\tau+1)~.
\end{align}

Since $\tau$ is a $(t)$-classical Gaussian state, using Eq.~(4) in the main text we have $s_\tau\leq s_0(\tau):=\frac{1}{2}\ln\frac{2n_\tau+1}{t}$. The task is then to find the point $(s^*_\tau,n_\tau^*)$ that maximizes the quantum fidelity subject to that constraint. Note from Eq.~\eqref{eq-supp:fidelity-Gaussian} that the fidelity monotonically decreases with $\abs{s_\sigma-s_\tau}$. So its optimization has two regimes.

First, when $s_\sigma\leq s_0(\sigma)$, a maximum of the fidelity is reached at $s_\tau^*=s_\sigma$ and $n_\tau^*=n_\sigma$, which gives $F(\s^*_\tau, n^*_\tau)=1$. This corresponds to the case when $\sigma \in \C_G^{(t)}$, i.e., $\sigma$ itself is a $(t)$-classical Gaussian state. This regime reproduces the previous result for exact simulation of GBS.

Second, when $s_\sigma\geq s_0(\sigma)$, the fidelity is maximized at $s_\tau^*=s_0(\tau)$. Substitute $s_0(\tau)$ into Eqs.~\eqref{eq-supp:fidelity-Gaussian}\--\eqref{eq:Lambda}, we have a function of  $n_\tau$ to optimize. It follows that its maximum is reached at $n_\tau^*=-\frac{1}{2}+\frac{1}{2}\sqrt{1+2t\sinh(2s_c)\exp(2s_\sigma)}$, where $s_c=\frac{1}{2}\ln(2n_\sigma+1)$.
The corresponding maximum fidelity is $F(s^*_\tau, n^*_\tau)=\text{sech}(s_\sigma-\frac{1}{2}\ln \frac{2n_\tau+1}{t})$. Our result agrees with previous result when $t=1$~\cite{marian2002quantifying}. 

Combining both regimes we write the maximum fidelity compactly as
$
\text{sech}\sbrac{\Theta\pbrac{s_\sigma-\frac{1}{2}\ln \frac{2n_\sigma+1}{t}}}
$,
where $\Theta$ is the ramp function. We now need to further optimize this over $t\in[1-2q_D,1]$. It is clear that the maximum value is achieved at $t = 1-2q_D$. Using Eqs.~\eqref{eq-supp:sigma-s}\--\eqref{eq-supp:sigma-n} we finally write the maximum fidelity as
\begin{align}
F_{\max}(\eta,q_D)=\text{sech}\sbrac{\frac{1}{2}\Theta\pbrac{\ln\pbrac{\frac{1-2q_D}{\eta e^{-2r}+1-\eta}}}}~.
\end{align}
Plugging this into Eq.~\eqref{eq-supp:sufficient-condition-fid} we obtain our final sufficient condition for classical simulability of noisy GBS. Notice that if we set $\epsilon=0$ we recover the corresponding condition for exact simulation given in Eq.~(8) in the main text, as expected.}

{
\section{Optimizing the simulability condition using sandwiched R\'{e}nyi relative entropy}
The additive error of our approximate classical algorithm of GBS is upper bounded by using the sandwiched R\'{e}nyi relative entropy, which is formally defined as follows for two quantum states $\rho_1$ and $\rho_2$~\cite{muller2013on,wilde2014strong},
\begin{equation}
 D_\alpha(\rho_1\Vert\rho_2)=\frac{1}{\alpha-1}\ln\tr\brac{\pbrac{\rho_1^{\frac{1-\alpha}{2\alpha}}\rho_2\rho_1^{\frac{1-\alpha}{2\alpha}}}^\alpha},
\end{equation}
for $\alpha\in(0,1)\cup (1,\infty)$. Below we list some of its properties which are useful for us later:
\begin{itemize}
    \item \textbf{Unitary invariance:} Given any unitary $U$, $D_\alpha(U\rho_1U^\dagger\Vert U\rho_2 U^\dagger)=D_\alpha(\rho_1\Vert\rho_2)$ for  $\alpha\in [\frac{1}{2},1)\cup (1,\infty)$~\cite{muller2013on}.
    \item \textbf{Additivity:} $D_\alpha(\rho_1\otimes w_1\Vert\rho_2\otimes w_2)=D_\alpha(\rho_1\Vert w_1)+D_\alpha(\rho_2\Vert w_2)$ for $\alpha\in [\frac{1}{2},1)\cup (1,\infty)$~\cite{muller2013on}.
    \item \textbf{Data processing:} For any completely positive trace-preserving map $\mathcal{E}$, we have
    \begin{align}
        D_\alpha(\rho_1\Vert\rho_2)\geq D_\alpha(\mathcal{E}(\rho_1)\Vert\mathcal{E}(\rho_2))~,
    \end{align}
    for $\alpha\in [\frac{1}{2},1)\cup (1,\infty)$~\cite{frank2013monotonicity,beigi2013sandwiched}.
    \item \textbf{Monotonicity:} $D_\alpha(\rho_1\Vert\rho_2)\geq D_{\alpha'}(\rho_1\Vert\rho_2)$ for $\infty>\alpha\geq\alpha'>0$~\cite{muller2013on,beigi2013sandwiched}.
\end{itemize}
Properties listed above makes the sandwiched R\'{e}nyi relative entropy a valid quantum distance measure. Moreover, it reduces to well-known quantum distance measures for specific values of $\alpha$. Specifically, we have~\cite{muller2013on,wilde2014strong}
\begin{align}
    D_{\frac{1}{2}}(\rho_1\Vert\rho_2)&=-\ln F(\rho_1,\rho_2),\\
    \lim_{\alpha\rightarrow 1} D_{\alpha}(\rho_1\Vert\rho_2)&= D(\rho_1\vert\rho_2):=\tr\brac{\rho_1(\ln\rho_1-\ln\rho_2)},\\
    \lim_{\alpha\rightarrow \infty} D_{\alpha}(\rho_1\Vert\rho_2)&:=D_{\max}(\rho_1\Vert\rho_2)=\inf\brac{\lambda\in\mathbb{R}: \rho_1\leq e^\lambda \rho_2},
\end{align}
which correspond to the logarithm of the quantum fidelity, the quantum relative entropy and the max-relative entropy, respectively.

Another essential ingredient in our deriviation of sufficient conditions is the generalized Pinsker's inequality~\cite{van2014renyi}:
\begin{align}
    D_T(P,Q)\leq \sqrt{\frac{2}{\alpha}D_\alpha(P\Vert Q)}~,
\end{align}
for $\alpha\in(0,1]$ and $D_\alpha(P\Vert Q)$ is the R\'{e}nyi divergence between two distributions,
\begin{align}
    D_\alpha(P\Vert Q)=\frac{1}{1-\alpha}\ln \sum_i p_i^\alpha q_i^{1-\alpha}~.
\end{align}
Notice that it is only proved for for $\alpha\in(0,1]$, which is the reason why we have to restrict our optimization over $\alpha\in [\tfrac{1}{2},1)$.

Similar to our calculation in the main text, by using the aforementioned properties, we can derive the following sufficient conditions for efficient simulation of GBS:
\begin{align}\label{eq:bound-alpha}
\frac{2}{\alpha}D^{\min}_\alpha(\eta,q_D)\leq \frac{\epsilon^2}{K}~,\quad \alpha\in[\frac{1}{2},1)~,
\end{align}
where $D^{\min}_\alpha(\eta,q_D)$ is the $\alpha$-order R\`{e}nyi relative entropy minimized over all permitted $(t)$-classical Gaussian states:
\begin{align}
D^{\min}_\alpha(\eta,q_D):= \min_{t\in[1-2q_D,1]}\min_{\tau\in \C_{G}^{(t)}}D_\alpha(\sigma\Vert\tau)~.
\end{align}
Since $D_\alpha$ is non-decreasing over $\alpha$, the l.h.s of Eq.~\eqref{eq:bound-alpha} is expected to reach it's minimum at some $\alpha^*$. To optimize over $\alpha$ we first try to calculate $D^{\min}_\alpha(\eta,q_D)$ for fixed $\alpha$. 

To facilitate our calculation we first define $Q(\sigma\Vert\tau)$ by
\begin{align}
D_\alpha(\sigma\Vert\tau)&:=\frac{1}{\alpha-1}\ln Q_\alpha(\sigma\Vert\tau)~,\\
Q_\alpha(\sigma\Vert\tau)  &=\tr\brac{\pbrac{\tau^{\frac{1-\alpha}{\alpha}}\sigma\tau^{\frac{1-\alpha}{\alpha}}}^\alpha}~.
\end{align}
From Ref.~\cite[Theorem 21]{seshadreesan2018renyi}, we have the following expression for $Q_\alpha(\sigma\Vert\tau)$ between two single-mode Gaussian states with zero mean:
\begin{align}
Q_\alpha(\sigma\Vert\tau)=\frac{1}{Z_\sigma^\alpha Z_\tau^{1-\alpha}}\sqrt{\det\sbrac{(\mb{V}_{\xi,\alpha}+i\mb{\Omega}})/2}~,
\end{align}
where
\begin{align}
\mb{Z}_\sigma&=\sqrt{\det\sbrac{(\mb{V}_{\sigma}+i\mb{\Omega})/2}}~,\\
\mb{V}_{\xi,\alpha}&=\frac{\pbrac{\mb{I}_2+(\mb{V}_\xi i\mb{\Omega})^{-1}}^\alpha  + \pbrac{\mb{I}_2-(\mb{V}_\xi i\mb{\Omega})^{-1}}^\alpha}
{\pbrac{\mb{I}_2+(\mb{V}_\xi i\mb{\Omega})^{-1}}^\alpha  - \pbrac{\mb{I}_2-(\mb{V}_\xi i\mb{\Omega})^{-1}}^\alpha }i\mb{\Omega}~,\\
\mb{V}_\xi & = \mb{V}_\sigma - \sqrt{\mb{I}_2+(\mb{V}_\sigma\mb{\Omega})^{-2}} \mb{V}_\sigma (\mb{V}_{\tau,\beta}+\mb{V}_\sigma)^{-1}\mb{V}_\sigma \sqrt{\mb{I}_2+(\mb{\Omega}\mb{V}_\sigma)^{-2}}~,\\
\beta&=\frac{1-\alpha}{\alpha}~.
\end{align}
From our optimization for $\alpha=1/2$, we expect the same landscape for general $\alpha$-order sanwiched R\'{e}nyi relative entropy: it is minimized at
\begin{align}\label{eq:landscape-1}
\phi_\tau&=\phi_\sigma~,\\
\label{eq:landscape-2}
s_\tau &= \begin{cases}
s_\sigma,\quad \text{for } s_\sigma < \frac{1}{2}\ln \frac{2n_\sigma+1}{\bar{t}}~,\\
\frac{1}{2}\ln\frac{2n_\tau+1}{\bar{t}} \quad \text{for  } s_\sigma \geq \frac{1}{2}\ln \frac{2n_\sigma+1}{\bar{t}}~.
\end{cases}
\end{align}
This will give us a function of $n_\tau$ to minimize.  The expression of $D_\alpha$ is too complicated to analytically show that above assumption is true. However, it can be verified analytically for $\alpha=1$, when the R\'{e}nyi relative entropy reduces to the quantum relative entropy~\cite{wilde2017gaussian}. 

For fixed $\alpha$, we obtain $D_\alpha^{\min}(\eta,q_D)$ by using Eq.~\eqref{eq:landscape-1} and then numerically minimizing over $n_\tau$. As shown in the Fig.~2 in the main text, We find out that the l.h.s of Eq.~\eqref{eq:bound-alpha} is minimized at $\alpha=\frac{1}{2}$. That is when $D_\alpha(\sigma,\tau)=-\ln F(\sigma,\tau)$. Therefore, the bound we calculate in the main text by using quantum fidelity is the tightest one we can get. To give an explicit example, in the figure above we plot $\frac{2}{\alpha}D_{\alpha}^{\min}$ against $\alpha$ for $\eta=0.1, r=0.11, q_D=0$.
\begin{figure}[H]
\centering
	\includegraphics[scale=0.5]{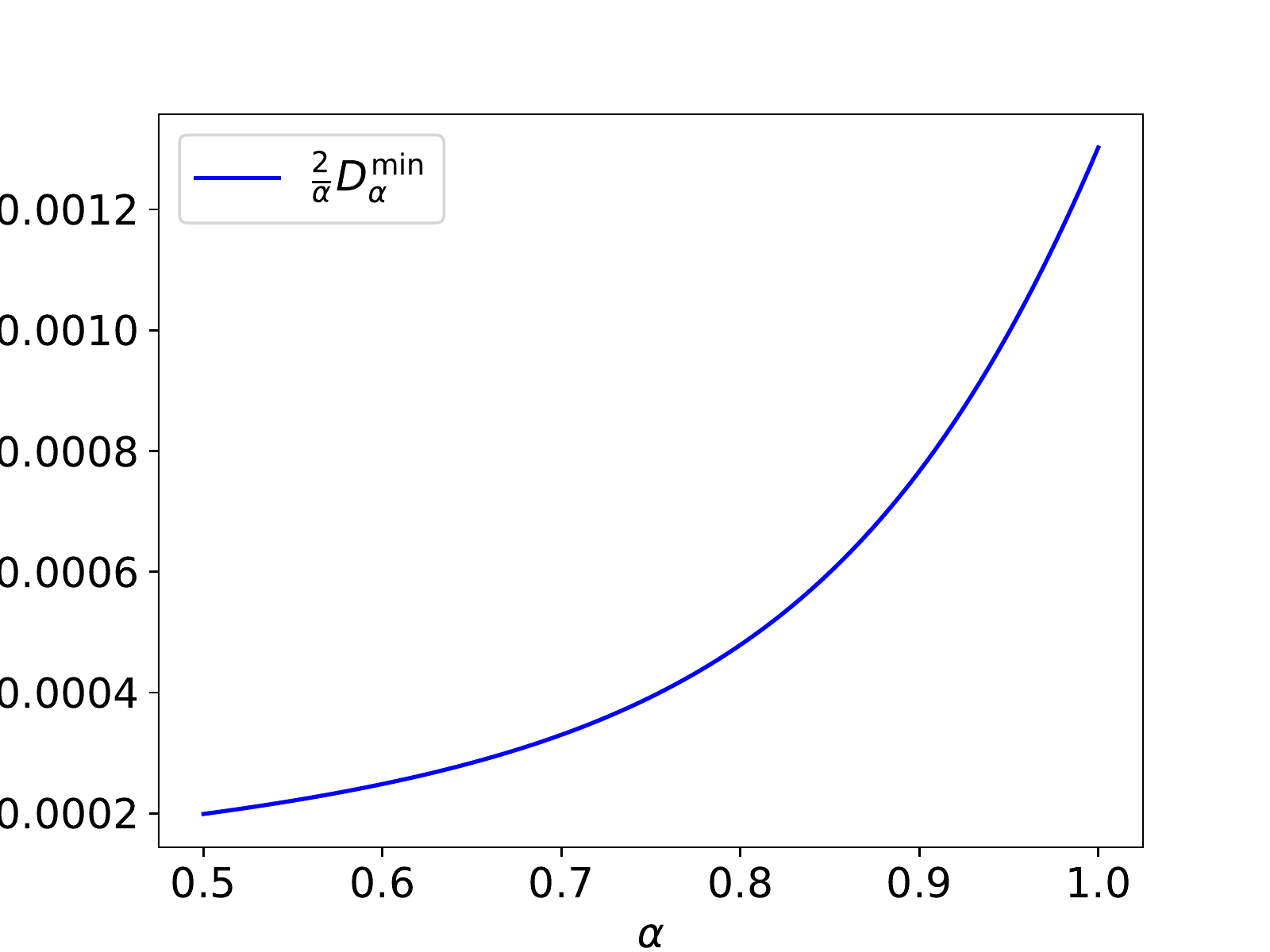}
	\caption{Here each point is obtained by numerically minimize $\frac{2}{\alpha} D_\alpha(\sigma\Vert\tau) $ over $\tau\in C_\G^{(t)} $ and over $t\in [1-2q_D, 1]$, assuming the landscape given in Eq.~\eqref{eq:landscape-1}-\eqref{eq:landscape-2}. The tightest bound happens at $\alpha=\frac{1}{2}$. \label{fig:distance-alpha}}
\end{figure}
\section{Efficient classical algorithm}
As explained in the main text, we devise our efficient approximate sampling algorithm in two steps: 1) for a given noisy Gaussian state find its closest classical state and 2) sample exactly from this classical state according to the algorithm given in \cite{rahimi2016sufficient}. Therefore, in order to present our approximate sampling algorithm, we first review the phase-space quasi-probability distribution (PQD) method used in \cite{rahimi2016sufficient}.  

Consider an $M$-mode bosonic system and let $\hat{\mb{x}}:=(\hat{q}_1,\hat{p}_1,\dots, \hat{q}_M,\hat{p}_M)$ denote the row vector of its quadrature operators. They satisfy the commutation relations ($\hbar=2$)
\begin{align}
\sbrac{\hat{x}_j,\hat{x}_k}& =2 i\Omega_{jk}~,\\
\mb{\Omega}
&=\mb{I}_M\otimes
\begin{pmatrix}
0 & -1 \\ 1 & 0
\end{pmatrix}~.
\end{align}
Here $\mb{I}_M$ is the $M\times M$ identity matrix. The annihilation operators are given by $\hat{a}_j=\tfrac{1}{2} \left( \hat{q}_j+i\hat{p}_j  \right)$ and  $\hat{\mb{a}}:=(\hat{a}_1, \ldots, \hat{a}_M)^T$.

The $\mb{t}$-ordered PQD [$(\mb{t})$-PQD] of an $M$-mode Hermitian operator $\rho$ is defined as 
\begin{align}
W^{(\mb{t})}_\rho(\mb{x})=(2\pi)^{-2M}\int d^{2M}\mb{\xi}\Phi^{(\mb{t})}_\rho(\mb{\xi})\exp(-i\mb{x}^T\mb{\Omega}\mb{\xi})~,
\end{align}
where $\mb{x}\in\mathbb{R}^{2M}$ are the eigenvalues of quadrature $\hat{\mb{x}}$ and $-\mb{I}_M\leq\mb{t}\leq\mb{I}_M$. Here $\Phi^{(\mb{t})}_\rho(\mb{\xi})=\tr[\rho D(\mb{\xi})]\exp(\mb{\xi}\mb{t}\mb{\xi}^\dagger/2)$ is the $\mb{t}$-ordered characteristic function and $D(\mb{\xi})=\exp(i\hat{\mb{x}}^T\mb{\Omega}\mb{\xi})$ is the displacement operator. We slightly abuse the notation that  $\mb{t}$ is a diagonal matrix with various order parameters on its diagonal. For $\mb{t} =
-\mb{I}_M$, $\mb{t}=0$ and $\mb{t}=\mb{I}_M$, we obtain the Husimi, Wigner and Glauber-Sudarshan functions, respectively. 

In Ref.~\cite{rahimi2016sufficient} it was shown that the probability of detecting a photon pattern can be written as the overlap between PQDs:
\begin{align}\label{eq:sampling-chain}
    P(\mb{n})= (2\pi)^M\int d^{2M}\mb{x}W_{\Pi}^{(-\mb{t})}(\mb{n}\vert\mb{x})W_{\rho_{\text{out}}}^{(\mb{t})}(\mb{x})~,
\end{align}
where $W_{\rho_{\text{out}}}^{(\mb{t})}(\mb{x})$ is the $(\mb{t})$-PQD of the pre-measurement state and $W_{\Pi}^{(-\mb{t})}(\mb{n}\vert\mb{x})$ is the $(-\mb{t})$-PQD of the measurement operator. The crucial observation in Ref.~\cite{rahimi2016sufficient} is the following: If both PQDs are positive and can be simulated efficiently for some $\mb{t}$, the device as a whole can be efficiently simulated by successively sampling from the chain of distributions given in Eq.~\eqref{eq:sampling-chain}. 

For threshold detector with quantum efficiency $\eta_D$ and dark count rate $q_D$, its POVM elements are given by
\begin{align}
\Pi_0&=(1-p_D)\sum_{n=0}^\infty(1-\eta_D)^n\ket{n}\bra{n}~,\\
\Pi_1&=I-\Pi_0~.
\end{align}
It was shown that their PQDs
\begin{align}\label{eq:PQD-detector}
    W_\Pi^{(-t)}(0|\mb{x})&=\frac{1-p_D}{2\pi[1-\eta_D(1-t)/2]}\exp\brac{-\frac{\eta_D\abs{\mb{x}}^2}{4[1-\eta_D(1-t)/2]}}~,\\
    W_\Pi^{(-t)}(1|\mb{x})&= \frac{1}{2\pi}-W_\Pi^{(-t)}(0|\mb{x})
\end{align}
are non-negative provided $t\geq \bar{t}=1-2p_D/\eta_D$~\cite{rahimi2016sufficient}.
Since $(\mb{t})$-classical Gaussian states have positive Gaussian $(t\mb{I}_M)$-PQDs,
\begin{align}\label{eq:PQD-Gaussian}
W^{(\mb{t})}_\rho(\mb{x})=
\frac{\exp\sbrac{-\frac{1}{2}\mb{x}^T(\mb{V}_\rho-t\mb{I}_{2M})^{-1}\mb{x}}}{(2\pi)^M\sqrt{\det(\mb{V}_\rho-t\mb{I}_{2M})}}~,
\end{align}
any $(\bar{t}\mb{I}_M)$-classical Gaussian states can be exactly and efficiently sampled according to ~\eqref{eq:sampling-chain}. Among those state, the one which is closest to the output state of a noisy GBS device is solved in the main text and with more details in Sec.~\ref{sec:opt-fid}.

We outline our approximate sampling algorithm in Alg.~\ref{alg:GBS}. Since the construction of the covariance matrix of a multi-mode Gaussian state, sampling from a multi-variant Gaussian distribution and sampling from a Bernoulli distribution can be done efficiently, our algorithm can be executed in polynomial time. An implementation using the Strawberry Fields library~\cite{killoran2019strawberry} can be found at~\footnote{Github repository at \url{https://github.com/dionysos137/GBS_classicality}}.
\begin{algorithm}[H]
\caption{Efficient approximate sampling of GBS}\label{alg:GBS}
\begin{algorithmic}[1]
\Procedure{ClosestClassicalState}{$r,K,\eta,\bm{U},t$}
    \State $M\gets$ size of $\bm{U}$ \Comment{Number of modes}
    \State $a_\pm\gets \eta e^{\pm 2r}+1-\eta$ \Comment{Recall $
\rho_{\text{out}}=\mathcal{U}\left(\sigma^{\otimes K}\otimes \ket{0}\bra{0}^{\otimes (M-K)}\right)\mathcal{U}^\dagger$}
    \State $s_\sigma\gets \frac{1}{4}\ln\frac{a_+}{a_-}$ \Comment{STS parameters of $\sigma$, see Sec.~\ref{sec:STS}}
    \State $n_\sigma\gets \frac{1}{2}(\sqrt{a_+a_-}-1)$
   
    \If{$s_\sigma <\frac{1}{2}\ln\frac{2n_\sigma+1}{t}$}\Comment{Find the closest state $\tau$, see Sec.~\ref{sec:opt-fid}}
        \State $s^*_\tau\gets s_\sigma$
        \State $n^*_\tau\gets n_\sigma$
    \Else

         \State $s_c\gets\frac{1}{2}\ln(2n_\sigma+1)$
    \State $n_\tau\gets-\frac{1}{2}+\frac{1}{2}\sqrt{1+2t\sinh(2s_c)\exp(2s_\sigma)}$
    \State $s^*_\tau\gets \frac{1}{2}\ln\frac{2n_\tau+1}{t}$
    \EndIf
    \State $\mb{V}_{\tilde{\rho}}\gets$ covariance matrix of $
\tilde{\rho}=\mathcal{U}\left(\tau^{\otimes K}\otimes \ket{0}\bra{0}^{\otimes (M-K)}\right)\mathcal{U}^\dagger$
    \State \textbf{Return} $\mb{V}_{\tilde{\rho}}$
\EndProcedure

\Procedure{SampleGBS}{$r,K,\eta,\bm{U},\eta_D, q_D,\epsilon$}
\State $M\gets$ size of $\bm{U}$ \Comment{Number of modes}
\State $\bar{t}\gets 1-2p_D/\eta_D$ \Comment{Will sample from a $(\bar{t})$-classical Gaussian state}
\If{$\text{sech}\sbrac{\frac{1}{2}\Theta\pbrac{\ln\pbrac{\frac{\bar{t}}{\eta e^{-2r}+1-\eta}}}} < e^{-\frac{\epsilon^2}{4K}}$}\Comment{Inequality (1) in the main text} 
    \State \textbf{Return} \Comment{Our test is passed, program aborts} 
\Else
    \State $\mb{V}\gets$ \textsc{ClosestClassicalState}$(r,\eta,U,\bar{t},K)$
    \State $\mb{x}\gets$\textsc{SampleNormal}$(mu=0, cov=(\mb{V}-\bar{t}\mb{I}_M)^{-1}))$ \Comment{See Eq.~\eqref{eq:PQD-Gaussian}}
    \State $\mb{n}\gets$ empty array of length $M$
    \For{each $i\in\sbrac{1,M}$}
        \State $\mb{x}_i\gets(\mb{x}[i],\mb{x}[i+1])$
        \State $\mb{n}[i]\gets$\textsc{SampleBernoulli}$(p=2\pi W_\Pi^{(-\bar{t})}(0\vert\mb{x}_i))$
        \Comment{See Eq.~\eqref{eq:PQD-detector}}
    \EndFor
    \State \textbf{Return} $\mb{n}$
\EndIf
\EndProcedure
\end{algorithmic}
\end{algorithm}
\end{document}